\theoremstyle{plain}
\newtheorem{theorem}{Theorem}[section]
\newtheorem{proposition}[theorem]{Proposition}
\newtheorem{lemma}[theorem]{Lemma}
\theoremstyle{definition}
\newtheorem{definition}[theorem]{Definition}
\newtheorem{remark}[theorem]{Remark}
\title{Optimal Neural Compressors for the Rate-Distortion-Perception Tradeoff}
\author{Eric Lei, Hamed Hassani, Shirin Saeedi Bidokhti \\
Department of Electrical and Systems Engineering, University of Pennsylvania \\
\texttt{\{elei, hassani, saeedi\}@seas.upenn.edu}
}
\begin{document}

\maketitle

\begin{abstract}
  Recent efforts in neural compression have focused on the rate-distortion-perception (RDP) tradeoff, where the perception constraint ensures the source and reconstruction distributions are close in terms of a statistical divergence. Theoretical work on RDP describes properties of RDP-optimal compressors without providing constructive and low complexity solutions. While classical rate distortion theory shows that optimal compressors should efficiently pack space, RDP theory additionally shows that infinite randomness shared between the encoder and decoder may be necessary for RDP optimality. In this paper, we propose neural compressors that are low complexity and benefit from high packing efficiency through lattice coding and shared randomness through shared dithering over the lattice cells. For two important settings, namely infinite shared and zero shared randomness, we analyze the RDP tradeoff achieved by our proposed neural compressors and show optimality in both cases. Experimentally, we investigate the roles that these two components of our design, lattice coding and randomness, play in the performance of neural compressors on synthetic and real-world data. We observe that performance improves with more shared randomness and better lattice packing.
\end{abstract}

\section{Introduction}
\label{sec:intro}

Neural compressors learned from large-scale datasets have achieved state-of-the-art performance in terms of the rate-distortion tradeoff \citep{balle2020nonlinear, yang2023intro}, especially when trained to produce reconstructions that align well with human perception \citep{mentzer2020high, tschannen2018deep, agustsson2019generative, muckley2023improving}. To achieve this, an additional perception loss term is used, typically defined as a statistical divergence $\delta$ between the reconstruction and source distributions. As such, recent focus has shifted to the rate-distortion-perception (RDP) framework, where compressors explore a triple tradeoff between rate, distortion and perception $\delta$ \citep{blau2019rethinking}.  The RDP function of a source $X \sim P_X$, defined as 
\begin{equation}
    \begin{aligned}
        R(D, P) =  \min_{P_{\hat{X}|X}} \quad & I(X;\hat{X}) \\[-6pt]
         \mathrm{s.t.} \quad & \EE_{P_X P_{\hat{X}|X}}[\dist(X, \hat{X})] \leq D, \quad \delta(P_X, P_{\hat{X}}) \leq P,
    \end{aligned}
    \label{eq:RDP}
\end{equation}
where $\dist$ is a distortion function, has emerged to describe this fundamental tradeoff \citep{matsumoto2018introducing, blau2019rethinking, li2011distribution}. Several RDP coding theorems have recently been proven \citep{theis2021coding, wagner2022rate, chen2022rate}, providing an operational meaning to \eqref{eq:RDP} as a fundamental limit of lossy compression for the RDP tradeoff\footnote{Specifically, \eqref{eq:RDP} is achievable by a sequence of source codes, and no source code can do better than \eqref{eq:RDP}.}. 

\begin{figure}[t]
	\centering
	\begin{subfigure}[b]{0.23\linewidth}
		\centering
		\includegraphics[width=0.85\linewidth]{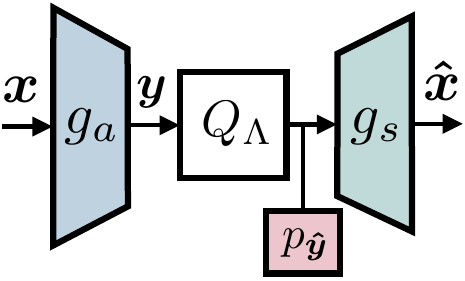}
        \vspace{-0.5em}
		\caption{Lattice transform coding (LTC); deterministic.}
        \label{fig:NTC}
	\end{subfigure}
    \hfill
	\begin{subfigure}[b]{0.22\linewidth}
		\centering
		\includegraphics[width=\linewidth]{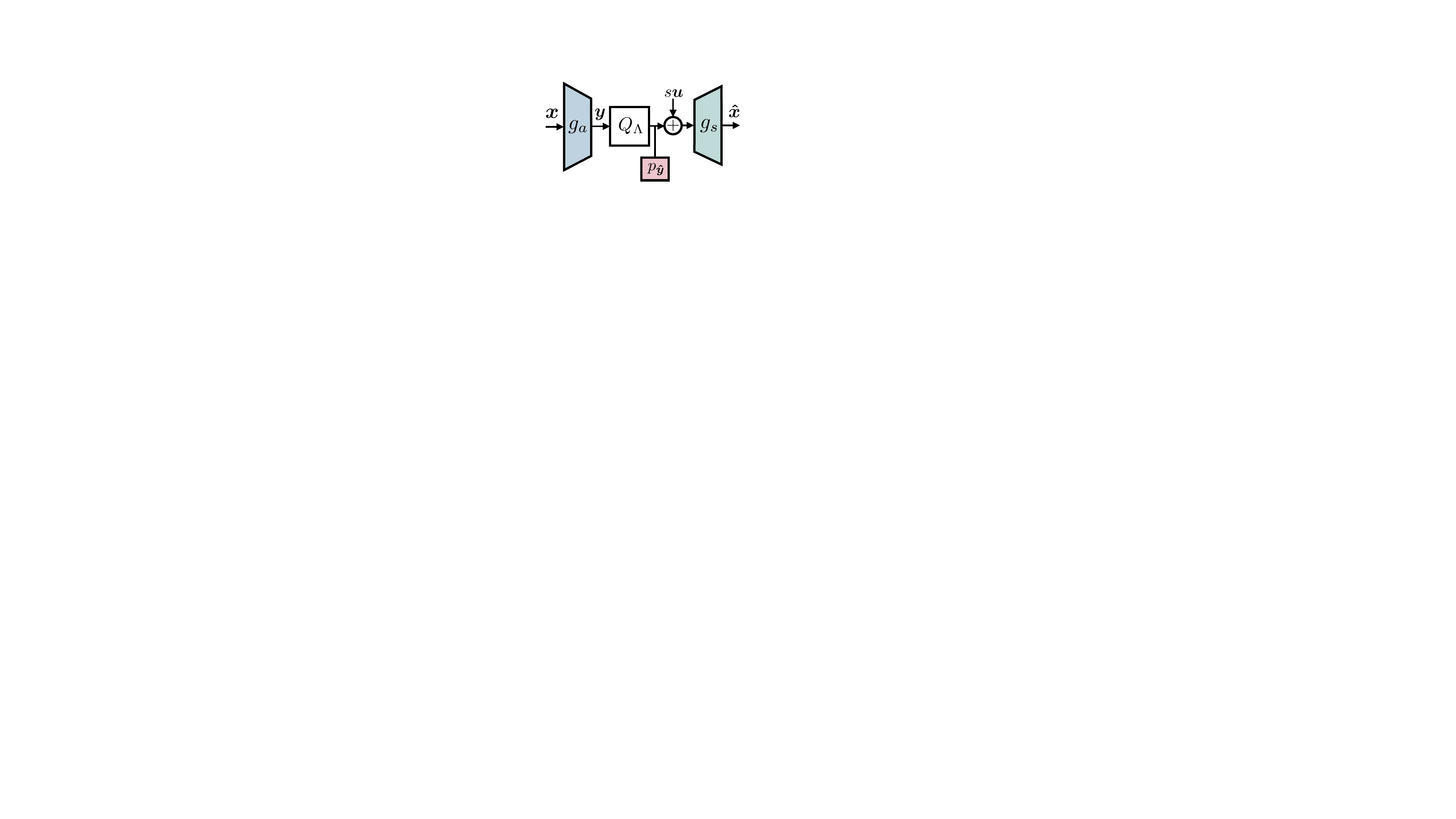}
        \vspace{-1.5em}
		\caption{Private-dither LTC (PD-LTC); $R_c=0$. }
		\label{fig:PD-LTC}
	\end{subfigure}
    \hfill
	\begin{subfigure}[b]{0.24\linewidth}
		\centering
		\includegraphics[width=\linewidth]{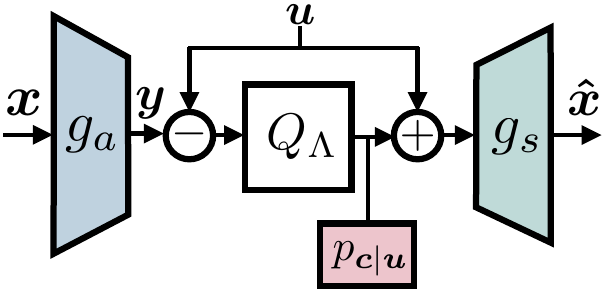}
        \vspace{-1.5em}
		\caption{Shared-dither LTC (SD-LTC); $R_c=\infty$.}
		\label{fig:SD-LTC}
	\end{subfigure}
    \hfill
    \begin{subfigure}[b]{0.27\linewidth}
		\centering
		\includegraphics[width=\linewidth]{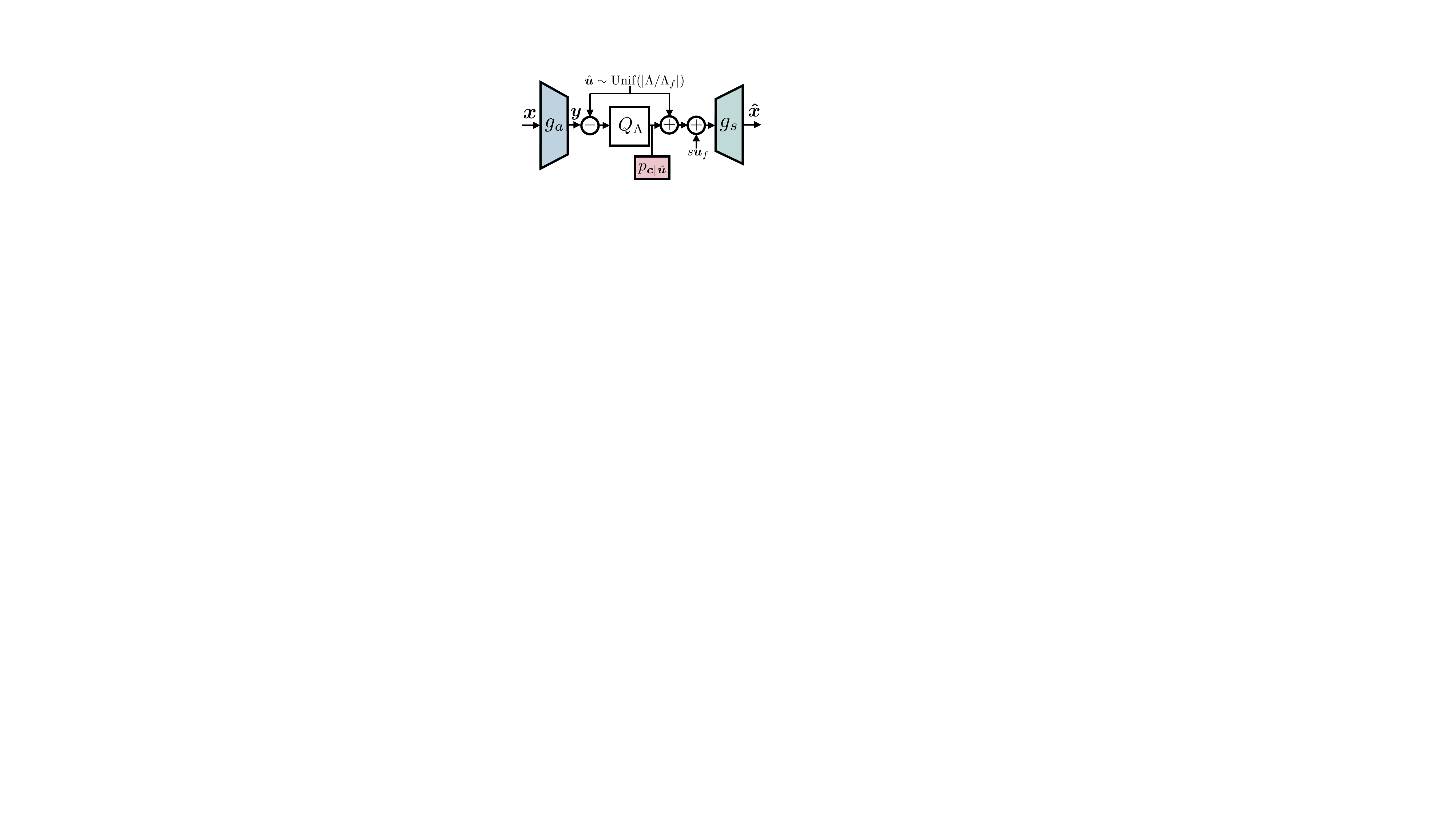}
        \vspace{-1.5em}
		\caption{Quantized-shared-dither LTC (QSD-LTC); finite $R_c$.}
		\label{fig:QSD-LTC}
	\end{subfigure}
    \vspace{-0.25em}
	\caption{Lattice transform coding (LTC) with $R_c$ bits of (shared) randomness using dithering; $\bu \sim \mathrm{Unif}(\mathcal{V}_0(\Lambda))$ and $\bu_f \sim \mathrm{Unif}(\mathcal{V}_0(\Lambda_f))$ are continuous, and $\hat{\bu} \sim \mathrm{Unif}(|\Lambda / \Lambda_f|)$ is discrete, where $\Lambda, \Lambda_f$ are nested lattices. LTC/PD-LTC entropy-code $Q_{\Lambda}(\by)$ with likelihoods $p_{\hat{\by}}$. SD-LTC and QSD-LTC entropy-code $Q_{\Lambda}(\by - \bu)$ and $Q_{\Lambda}(\by - \hat{\bu})$ with likelihoods $p_{\bc|\bu}$ and $p_{\bc|\hat{\bu}}$, respectively.}
	\label{fig:intro_fig}
    \vspace{-1.5em}
\end{figure}

In this paper, we investigate how neural compressors may achieve RDP optimality, and what components are necessary for good RDP performance. The RDP coding theorems, while non-constructive, shed light on properties of RDP-optimal compressors. In contrast to the classical rate-distortion function $R(D) = R(D, \infty)$, which is asymptotically achievable by fully deterministic codes, achieving the RDP function may require not only stochastic encoding/decoding but also \textit{infinite} randomness shared between the encoder and decoder. Devising neural compressors that may achieve RDP optimality at low complexity is an important step in advancing the theory and practice of neural compression. Moreover, infinite shared randomness may not always be available. In settings where shared randomness is \emph{limited}, or \emph{unavailable}, we are  still interested in the best possible schemes.

    Infinite shared randomness has previously proved successful in RDP-oriented compressors such as \citet{theis2022lossy} that are based on  reverse channel coding (RCC) \citep{theis2022algorithms, li2018strong, cuff2013distributed}. RCC enables communication of a sample from a prescribed distribution (e.g., one that is good for RDP) under a limited rate constraint. The performance of RCC is provably near-optimal, but this comes at the cost of high complexity. Moreover, RCC heavily relies on infinite amount of randomness and does not allow for limited or zero randomness. We seek to develop methods with much lower complexity and allow for zero, limited, or infinite amount of randomness.

In the classical rate-distortion framework, recent work has investigated whether neural compressors are optimal, where vector quantization (VQ) \citep{gersho2012vector} is known to be optimal, but suffers high complexity. In \citet{lei2024approaching}, it was shown that lattice transform coding (LTC), which uses lattice quantization (LQ) in the latent space, can achieve a performance close to VQ at significantly lower complexity; see also \citet{zhang2023lvqac, kudo2023lvqvaeendtoend}. VQ and LTC provide near-optimal RD performance due to high space-packing efficiency. However, it is not clear at first glance whether VQ-like coding is good for the RDP setting, where randomized reconstructions are often required \citep{tschannen2018deep} to satisfy the perception constraint. It is further unclear how randomness should be incorporated with quantization in a way that is RDP-optimal. 

Dithering is a common method of randomizing a quantizer. Classically, a shared random dither can help quantization noise admit desired statistical properties, and has applications such as universal quantization \citep{ziv1985universal}, and practical training methods for neural compression \citep{balle2020nonlinear}. For the RDP setting, using dithering to introduce randomness has been investigated recently; \citet{theis2021advantages} show on a simple source how dithered scalar quantization (SQ) benefits RDP.

In this paper, we introduce randomness, shared and private, into neural compressors via architectures that build on LTC (Fig.~\ref{fig:intro_fig}), and investigate the roles that randomness and quantization play in how neural compressors perform for the RDP tradeoff while managing complexity. We study LTC where randomness takes the form of a random lattice dither vector. When the randomness is private (i.e., none shared), the dither is only added at the decoder (PD-LTC; Fig.~\ref{fig:PD-LTC}). When the randomness is shared, the dither is subtracted at the encoder, and added back at the decoder (SD-LTC and QSD-LTC; Figs.~\ref{fig:SD-LTC}, \ref{fig:QSD-LTC} respectively). This work unifies transform coding, lattice coding, and randomized coding, identifying the roles each play when used together for RDP. Our contributions are the following.
\begin{enumerate}[itemsep=1pt, topsep=0pt, leftmargin=1cm]
    \item We propose LTC with infinite or no shared randomness, using a shared dither (SD-LTC) or private dither (PD-LTC) respectively, and describe the benefits of the former. 
    We then propose a discrete dithering scheme defined via nested lattices that allows for \textit{finite} randomness to be shared between the encoder and decoder. This scheme, QSD-LTC, interpolates between PD- and SD-LTC, enabling control over the rate of shared randomness.
    \item We theoretically analyze and show optimality of SD-LTC and PD-LTC on the Gaussian source with squared 2-Wasserstein perception and asymptotic blocklength in Sec.~\ref{sec:theory}. For the former (where infinite shared randomness is available), we use the sphere-like behavior of lattice cells and the AWGN-like behavior of dithered LQ to show that SD-LTC achieves the RDP function $R(D, P)$. For the latter (where no shared randomness is available), we use lattice Gaussian coding methods to show that under a perception constraint of $P = 0$, PD-LTC achieves $R(\nicefrac{D}{2}, \infty)$, which coincides with the fundamental RDP limit under no shared randomness.
    \item We empirically study PD-LTC, SD-LTC, and QSD-LTC performance on synthetic and real-world sources in Sec.~\ref{sec:results}. We verify our theory and show that RDP performance improves with increased shared randomness and better lattice packing efficiency.
\end{enumerate}

\section{Background and Related Work}
\label{sec:related}

\textbf{Neural Compression for RDP.} 
Most neural compressor designs that account for perception are derivative of the nonlinear transform coding (NTC) setup \citep{balle2020nonlinear}. These models are parameterized by analysis transform $g_a$, synthesis transform $g_s$, and entropy model $p_{\hat{\by}}$; see Fig.~\ref{fig:NTC}. To compress a source $\bx$, the encoder computes the latent $\by = g_a(\bx)$, which gets scalar quantized via rounding. The codeword or quantized latent $\hat{\by} = Q_{\Lambda}(\by)$ is then entropy coded using an entropy model $p_{\hat{\by}}$. The decoder provides the reconstruction $\hat{\bx} = g_s(\hat{\by})$. The model is trained end-to-end via 
\begin{equation}
        \min_{\theta} \mathbb{E} \bracket*{-\log p_{\hat{\by}}(\hat{\by})} + \lambda_1 \EE\bracket*{\dist\paran*{\bx, \hat{\bx}}} + \lambda_2 \delta(P_{\bx}, P_{\hat{\bx}}),
        \label{eq:NTC_realism}
\end{equation}
where $\theta$ denotes the parameters of the codec $(g_a, g_s, p_{\hat{\by}})$, and $\lambda_1, \lambda_2 \geq 0$ control the RDP tradeoff.
Many state-of-the-art methods \citep{tschannen2018deep, agustsson2019generative, mentzer2020high, muckley2023improving, he2022po, zhang2021universal} optimize \eqref{eq:NTC_realism}, primarily differing in the choice of $\Delta$ and the way $\delta$ is estimated in practice, which is typically done using an adversarial loss involving a discriminator neural network.
The use of randomness (shared or not) in these methods has not always been consistent, nor fully explored in its relation to RDP optimality. While \citet{blau2019rethinking} add uniform noise to the quantized latent, \citet{tschannen2018deep, agustsson2023multi} concatenate  noise to the quantized latent, and \citet{mentzer2020high, muckley2023improving} do not use randomness at all. \citet{zhang2021universal} use (shared) dithered SQ with NTC, but do not explore why or how it is good for RDP. In contrast, we study LQ with dithering under varying levels of shared randomness, and show that one needs both the improved packing efficiency of lattices along with shared lattice dithering to achieve best performance.
While there exist a few RDP-oriented neural compressors that do not fit the NTC framework \citep{theis2022lossy, yang2024progressive, yang2024lossy} and instead leverage diffusion models, our work focuses on NTC-style neural compressors for RDP, as they remain the most pervasive type of neural compressor in use, and do not suffer from the higher complexity of RCC or diffusion models.

\textbf{Information-Theoretic Analysis of RDP.}
The RDP function in \eqref{eq:RDP} was formally introduced and widely adopted following \citet{blau2019rethinking}; see also  earlier related work by \citet{matsumoto2018introducing, li2011distribution, saldi2015output}. \eqref{eq:RDP} is a purely informational quantity, i.e., a function of the source $P_X$, and thus does not have a meaning as a fundamental limit of compression without a corresponding coding theorem describing it as such. 
The first RDP coding theorem was provided by \citet{saldi2015output}, who show that for perfect perception ($P=0$), \eqref{eq:RDP} is achievable under infinite shared randomness, and conversely that no compressor can outperform \eqref{eq:RDP}. For general $P$, \citet{theis2021coding} establish optimality (i.e., achievability and converse) of \eqref{eq:RDP} when infinite shared randomness is available. \citet{saldi2015output} further characterizes the fundamental RDP limit at $P=0$ when only $R_c$ bits per sample of shared randomness is allowed, which only coincides with \eqref{eq:RDP} when $R_c = \infty$; a smaller $R_c$ results in a strictly worse fundamental limit (see also \citet{wagner2022rate}). 
This establishes the necessity of infinite shared randomness to achieve \eqref{eq:RDP}. 
\citet{li2011distribution} use dithered LQ to show achievability of \eqref{eq:RDP} for $P=0$; in contrast, we show this to be true for general $P$ and also analyze the $R_c=0$ case, which requires a new set of proof techniques based on lattice Gaussian coding \citep{ling2014achieving}. Under private randomness ($R_c=0$), \citet{yan2021perceptual} establish $R(\nicefrac{D}{2}, \infty)$ as the fundamental limit for $P=0$; \citet{hamdi2024rate} show that randomized encoders do not help. \citet{chen2022rate} study the RDP tradeoff when the perception constraint is strong- or weak-sense\footnote{On vectors $\bx, \hat{\bx} \in \mathbb{R}^n$, strong-sense denotes $\delta(P_{\bx}, P_{\hat{\bx}}) < P$; weak-sense denotes $\delta(P_{\bx_i}, P_{\hat{\bx}_i}) < P, \forall i$.}. Under weak-sense, it was shown that $R(D, P)$ is achievable without shared randomness, whereas under strong-sense, shared randomness is necessary, agreeing with \citet{saldi2015output, wagner2022rate}. In our work, we focus on strong-sense, since that is typically how the perception is measured in practice (i.e., in \eqref{eq:NTC_realism}). In particular, we focus on strong-sense Wasserstein, since that aligns with $\delta$'s and evaluation metrics chosen in practice such as Fréchet inception distance.

Coding theorems use schemes that are typically not constructive (e.g. random coding) and/or not practical (e.g., high complexity RCC schemes). They do, however, provide insights on structures that may be useful or even necessary for optimality. In addition to the necessity of infinite shared randomness to achieve \eqref{eq:RDP}, we show in Sec.~\ref{sec:RCC} how the RCC scheme of \citet{theis2021coding} implies that a good RDP compressor should behave like a randomized VQ: it should have VQ-like packing efficiency (i.e., good for distortion) combined with random codewords that follow the right distribution (i.e, good for perception). In our work, the former is handled with LQ (at low complexity), while the latter is handled with dithering; our schemes are further shown optimal on Gaussians. 



The benefits of (potentially shared) randomness have also been discussed outside the context of coding theorems. \citet{tschannen2018deep} show how randomized decoders are necessary to achieve perfect perceptual quality. \citet{theis2021advantages} illustrate how quantizers benefit from shared randomness on a toy circle source. Similarly, \citet{zhou2024staggered} demonstrate how staggered SQ can use limited shared randomness to improve performance on the circle. In contrast, our work presents a more general approach for infinite, limited, and no shared randomness with LQ that empirically shows its benefits and is provably optimal on Gaussians.

\textbf{Lattice Quantization.}
Lattice quantization (LQ) involes a lattice $\Lambda$, which consists of a countably infinite set of codebook vectors in $n$-dimensional space \citep{ConwaySloane, zamir2014lattice}. We denote $Q_{\Lambda}(\bx) := \argmin_{\blambda \in \Lambda} \|\blambda - \bx\|^2$ as the LQ of a vector $\bx \in \mathbb{R}^n$. The fundamental cell, or Voronoi region, of the lattice is given by $\mathcal{V}_0(\Lambda) := \{\bx \in \mathbb{R}^n: Q_{\Lambda}(\bx) = \bzero\}$, i.e., the set of all vectors quantized to $\bzero$. We denote the lattice volume as $V(\Lambda):= \int_{\mathcal{V}_0(\Lambda)} d\bx$, the lattice second moment as $\sigma^2(\Lambda):= \frac{1}{n}\EE_{\bu \sim \mathrm{Unif}(\mathcal{V}_0(\Lambda))}[\|\bu\|^2]$, and the normalized second moment (NSM) $G(\Lambda) = \frac{\sigma^2(\Lambda)}{(V(\Lambda))^{2/n}}$. The lattice's packing efficiency can be measured by how small its NSM is; it is known that there exists sequences of lattices $\{\Lambda^{(n)}\}_{n=1}^\infty$ that achieve the sphere lower bound, i.e., $\lim_{n\rightarrow \infty}G(\Lambda^{(n)}) = \frac{1}{2\pi e}$, where the lattice cells become sphere-like \citep[Ch.~7]{zamir2014lattice}. The closest vector problem (CVP), which finds $Q_{\Lambda}(\bx)$, is NP-hard in general, but many lattices with low NSM (e.g., $E_8$, Barnes-Wall, Leech) have efficient CVP solvers. Recently proposed polar lattices \citep{liu2021polar}, which have polynomial time CVP, were shown to be sphere-bound-achieving \citep{liu2024quantization}. Recently, LQ was explored in neural compression \citep{zhang2023lvqac, kudo2023lvqvaeendtoend} as a low-complexity method that improves the poor packing efficiency of SQ, equivalent to the integer lattice $\mathbb{Z}_n$. \citet{lei2024approaching} showed that NTC transforms are insufficient to overcome the poor packing efficiency of a suboptimal lattice, leading to the lattice transform coding (LTC) framework; performance improves with increased lattice packing efficiency.  


\section{Lattice Transform Coding for RDP}
\label{sec:method}

We seek to design compressors that are RDP-optimal given constraints on the amount of shared randomness available and are also low complexity. As mentioned in Sec.~\ref{sec:related}, we show in Sec.~\ref{sec:RCC} that a good RDP scheme should implement randomized VQ. Dithering is a method that can enable a quantizer to be randomized.
In the context of neural compression, NTC transforms are unable to generate VQ-like regions in the source space due to the limited packing efficiency of latent space SQ \citep{lei2024approaching}. The LTC framework, which uses latent LQ, can provide the benefits of VQ-like regions in the source space.  LQ naturally supports dithering that is uniform over the lattice cell. Thus, dithered LQ emerges as a promising scheme that is both randomized and VQ-like, while maintaing low complexity. In the following, we describe how LQ with dithering can be integrated into the LTC framework and trained end-to-end. We present three architectures that handle the cases of infinite-, no-, and finite-shared randomness via a shared-, private-, and quantized shared-dither, respectively.

\subsection{LTC with Infinite Shared Randomness}

We first define the shared-dither LTC, which assumes infinite shared randomness between the encoder and decoder. We denote fundamental cell $\mathcal{V}_0(\Lambda)$ as $\mathcal{V}_0$ for ease of notation.

\begin{definition}[Shared-Dither Lattice Transform Code (SD-LTC); Fig.~\ref{fig:SD-LTC}]
    A SD-LTC is a triple $(g_a, g_s, \Lambda)$, with mappings $g_a$, $g_s$ and lattice $\Lambda$. A random dither $\bu \sim \mathrm{Unif}(\mathcal{V}_0(\Lambda))$, uniform over the lattice cell, is shared between the encoder and decoder. The SD-LTC computes the latent $\by = g_a(\bx)$, entropy codes $\bc = Q_{\Lambda}(\by - \bu)$ at the encoder, and the decoder outputs $\hat{\bx} = g_s(\bc + \bu)$. 
\end{definition}

Since $\bu$ is available at the encoder and decoder (which utilizes the availability of infinite shared randomness), the operational rate of SD-LTC is given by 
\begin{align}
    H(\bc|\bu) &= \EE_{\by, \bu}[-\log p_{\bc|\bu}(\bc|\bu))] = \EE_{\by, \bu}\bracket*{-\log \int_{\mathcal{V}_0 + \bc} p_{\by-\bu|\bu}(\bw|\bu)d\bw}  \\
    &= \EE_{\by, \bu}\bracket*{-\log \EE_{\bu'}[p_{\by}(Q_{\Lambda}(\by-\bu) + \bu + \bu')]}, \label{eq:rate_shared_dither}
\end{align}
where $\bu' \sim \mathrm{Unif}(\mathcal{V}_0)$. As an alternative, we can make use of the additive channel equivalence ($Q_{\Lambda}(\by-\bu)+\bu \stackrel{d}{=} \by + \bu_{\mathrm{eq}}$, where $\bu_{\mathrm{eq}} \sim \mathrm{Unif}(\mathcal{V}_0(\Lambda))$) \citep[Thm.~5.2.1]{zamir2014lattice}, yielding
\begin{align}
    H(\bc|\bu) &= I(\by; \by + \bu) = h(\by + \bu) - h(\bu) = \EE_{\by, \bu}[-\log p_{\by+\bu}(\by + \bu)] - \log V(\Lambda) \\
    &= \EE_{\by, \bu} \bracket*{-\log \int_{\mathcal{V}_0(\Lambda)+\by+\bu} p_{\by}(\bw) d\bw} = \EE_{\by,\bu}\bracket*{-\log \EE_{\bu'}[p_{\by}(\by + \bu + \bu')]},
    \label{eq:rate_shared_dither2}
\end{align}
where $h(\cdot)$ denotes differential entropy, $V(\Lambda)$ is the lattice volume, and \eqref{eq:rate_shared_dither2} holds since $p_{\by+\bu}(\by + \bu) = \frac{1}{V(\Lambda)}\int_{\mathcal{V}_0(\Lambda)+\by+\bu} p_{\by}(\bw) d\bw$. 
The objective is trained with $\min_{\theta} H(\bc|\bu) + \lambda_1 \EE\bracket*{\dist\paran*{\bx, \hat{\bx}}} + \lambda_2 \delta(P_{\bx}, P_{\hat{\bx}})$,
with learned parameters $\theta$. Either \eqref{eq:rate_shared_dither} or \eqref{eq:rate_shared_dither2} can be used for $H(\bc|\bu)$; \eqref{eq:rate_shared_dither} requires the straight-through estimator due to non-differentiability of the quantizer. 
In the following, we comment on several connections between \eqref{eq:rate_shared_dither}, \eqref{eq:rate_shared_dither2}, and other works in the neural compression literature.

\textbf{Integrating the learned $p_{\by}$.}
The inner integral in \eqref{eq:rate_shared_dither} and \eqref{eq:rate_shared_dither2} can be computed exactly under SQ (equivalently, $\Lambda = \mathbb{Z}^n$) using the CDF of $p_{\by}$, following \citet{balle2018variational}. 
For general lattices, the inner integral can be estimated using Monte-Carlo following \citet{lei2024approaching}, by sampling vectors uniform over the lattice cell \citep{conway1984}. 

\textbf{Noisy proxies and operational rates.}
The equivalence of \eqref{eq:rate_shared_dither} and \eqref{eq:rate_shared_dither2} was shown in \citet{balle2020nonlinear} for $\Lambda = \mathbb{Z}_n$. Their equivalence for general lattices, as shown above, follows from \citet{zamir2014lattice}. \citet{balle2020nonlinear} uses this equivalence to support \eqref{eq:rate_shared_dither2} as a training objective, which is known as the noisy proxy to quantization in the literature. This is useful since \eqref{eq:rate_shared_dither2} is differentiable with respect to $\by$, whereas \eqref{eq:rate_shared_dither} is not. Here, we emphasize that both \eqref{eq:rate_shared_dither} and \eqref{eq:rate_shared_dither2} represent the operational rate for a SD-LTC. For deterministic NTC/LTC trained with the noisy proxy, a deterministic dither (perhaps $\bzero$) is chosen test time, resulting in a different operational rate that does not average over $\bu$. \citet{agustsson2020universally} proposed dithered SQ as universal quantization; however, they were motivated by reducing train/test rate mismatch rather than the RDP tradeoff, which is the focus of this work. 

\subsection{LTC with No Shared Randomness}
We now consider when no shared randomness is available. As mentioned, decoder randomness is needed for the perception constraint, and this manifests itself as a private dither at the decoder.

\begin{definition}[Private-Dither Lattice Transform Code (PD-LTC); Fig.~\ref{fig:PD-LTC}]
    A PD-LTC is a triple $(g_a, g_s, \Lambda)$, with transforms $g_a$ and $g_s$ and lattice $\Lambda$. A random dither $\bu \sim \mathrm{Unif}(\mathcal{V}_0(\Lambda))$, uniform over the lattice cell, is at the decoder only. The PD-LTC entropy-codes $\hat{\by} = Q_{\Lambda}(g_a(\bx))$, and the decoder outputs $\hat{\bx} = g_s(\hat{\by} + s \bu)$, where $s > 0$ is a parameter that controls the dither magnitude.
\end{definition}

For a PD-LTC (shown in Fig.~\ref{fig:PD-LTC}), the operational rate is the same as in deterministic LTC, given by $H(\hat{\by}) = \EE_{\by}[-\log p_{\hat{\by}}(\hat{\by})]$, as there is no shared dither. Therefore, the training objective remains the same as \eqref{eq:NTC_realism}. The following proposition provides some intuition on why shared-dither quantization (Fig.~\ref{fig:SD-LTC}) is superior to private-dither quantization (Fig.~\ref{fig:PD-LTC}) in terms of distortion. 

\begin{proposition}
    Define $\hat{\bx}_{\mathsf{SD}} = Q_{\Lambda}(\bx-\bu)+\bu$, $\hat{\bx}_{\mathsf{PD}} = Q_{\Lambda}(\bx) + s\bu$. For any $\bx$ and $s \geq 1$,
        \begin{align}
            \EE\bracket*{\|\bx-\hat{\bx}_{\mathsf{PD}}\|^2}  &= s^2\EE\bracket*{\|\bx-\hat{\bx}_{\mathsf{SD}}\|^2} + \EE\bracket*{\|\bx - Q_{\Lambda}(\bx)\|^2} \geq \EE\bracket*{\|\bx-\hat{\bx}_{\mathsf{SD}}\|^2}.
        \end{align}
    \label{prop:PD_MSE}
    \vspace{-2em}
\end{proposition}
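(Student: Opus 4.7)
The plan is to analyze the two error vectors $\bx - \hat{\bx}_{\mathsf{SD}}$ and $\bx - \hat{\bx}_{\mathsf{PD}}$ separately, exploiting the independence of $\bu$ from $\bx$ and the symmetry of the Voronoi cell about the origin.

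First I would handle the shared-dither case. Writing $\bx - \hat{\bx}_{\mathsf{SD}} = \bx - \bu - Q_{\Lambda}(\bx - \bu)$, I would invoke the crypto lemma (already used elsewhere in the paper, see footnote after \eqref{eq:rate_shared_dither2}) to conclude that this error equals in distribution $-\bu_{\mathrm{eq}}$ where $\bu_{\mathrm{eq}} \sim \mathrm{Unif}(\mathcal{V}_0(\Lambda))$ and is independent of $\bx$. Hence $\EE\|\bx - \hat{\bx}_{\mathsf{SD}}\|^2 = \EE\|\bu\|^2 = n\sigma^2(\Lambda)$.

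Next I would tackle the private-dither case. Let $\be(\bx) := \bx - Q_{\Lambda}(\bx)$ denote the deterministic quantization error, so that $\bx - \hat{\bx}_{\mathsf{PD}} = \be(\bx) - \bu$. Expanding,
\begin{align}
\EE\|\bx - \hat{\bx}_{\mathsf{PD}}\|^2 = \EE\|\be(\bx)\|^2 - 2\,\EE[\be(\bx)^\top \bu] + \EE\|\bu\|^2.
\end{align}
The cross term vanishes because $\bu$ is independent of $\bx$ (hence of $\be(\bx)$) and because the Voronoi cell $\mathcal{V}_0(\Lambda)$ of any lattice is symmetric about the origin, giving $\EE[\bu] = \bzero$. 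Therefore
\begin{align}
\EE\|\bx - \hat{\bx}_{\mathsf{PD}}\|^2 = \EE\|\bx - Q_{\Lambda}(\bx)\|^2 + \EE\|\bu\|^2.
\end{align}

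Finally, I would combine the two results: since $\EE\|\bu\|^2 = \EE\|\bx - \hat{\bx}_{\mathsf{SD}}\|^2$ from step one, substitution yields the claimed identity, and the inequality is immediate from nonnegativity of $\EE\|\bx - Q_{\Lambda}(\bx)\|^2$. There is no genuine obstacle here; the only subtle ingredient is recognizing the two facts that the shared-dither error is marginally uniform on $\mathcal{V}_0(\Lambda)$ and independent of $\bx$ (crypto lemma), and that the lattice Voronoi cell's origin-symmetry kills the cross term in the private-dither expansion. Both follow from standard lattice-dithered-quantization machinery cited in the paper.
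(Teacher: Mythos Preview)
Your proposal is correct and follows essentially the same route as the paper's proof: both invoke the crypto lemma to get $\EE\|\bx-\hat{\bx}_{\mathsf{SD}}\|^2=\EE\|\bu\|^2=n\sigma^2(\Lambda)$, then expand $\|\bx-Q_\Lambda(\bx)-\bu\|^2$ and use independence plus $\EE[\bu]=\bzero$ to drop the cross term. You are simply more explicit than the paper about why the cross term vanishes.
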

\setlength\intextsep{1.5pt}
\setlength\columnsep{5pt}
\begin{wrapfigure}{O}{0.215\linewidth}
    \centering
    \includegraphics[width=\linewidth]{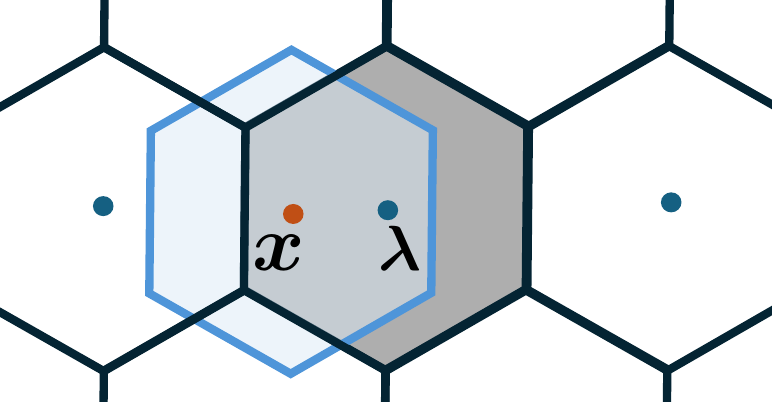}
    \caption{Reconstr.~of $\bx$ under PD (gray) or SD (blue); $s=1$.}
    \label{fig:cells}
\end{wrapfigure}
The proof is provided in Sec.~\ref{sec:proofs}. Prop.~\ref{prop:PD_MSE} implies the PD error is the sum of the SD error and the error under deterministic quantization; see Fig.~\ref{fig:cells}. While $\hat{\bx}_{\mathsf{SD}}$ is random over the blue cell, incurring error equal to the lattice second moment, $\hat{\bx}_{\mathsf{PD}}$ is random over the gray cell, incurring additional quantization error. Regarding $s\geq 1$, lower PD error is possible with $s < 1$, but the support of $\hat{\by}+s\bu$ would have gaps within each lattice cell, which may make the perception constraint difficult to satisfy. In Sec.~\ref{sec:theory}, we show that $s>1$ is necessary for PD-LTC to achieve optimality on the Gaussian source.



While Prop.~\ref{prop:PD_MSE} provides an idea of how SD-LTC and PD-LTC may perform distortion-wise, the rate and perception are more complicated.  For perception, the induced reconstruction distributions are different, since that of $\hat{\bx}_{\mathsf{SD}}$ is a convolution between the source and a zero-mean dither, whereas that of $\hat{\bx}_{\mathsf{PD}}$ is a mixture of dithers centered on lattice vectors. For the rate, compared to $H(\hat{\by})$, $H(\bc|\bu)$ has additional randomness due to the averaging over $\bu$, and thus we may expect the rate of SD-LTC to be larger than that of PD-LTC if they share the same transforms. Ideally, this potential increase in rate can help achieve an overall superior RDP tradeoff for SD-LTC. In Sec.~\ref{sec:theory}, we show that this is true on the Gaussian source, and verify it empirically on real-world sources as well in Sec.~\ref{sec:results}.

\subsection{LTC with Finite Shared Randomness}
\label{sec:QSD-LTC}

While the availability of truly infinite shared randomness may be difficult to obtain in practice, the availability of finite shared randomness may be more feasible. Let $R_c$ denote the rate of shared randomness in bits per dimension. RDP theory informs us that performance should improve as $R_c$ increases (Sec.~\ref{sec:related}).
To make use of finite shared randomness, we propose a scheme (Fig.~\ref{fig:QSD-LTC}) that interpolates between SD-LTC and PD-LTC by using nested lattices \citep[Ch.~8]{zamir2014lattice}. Lattices $\Lambda, \Lambda_f$ are nested if $\Lambda$ is a sub-lattice of the fine lattice $\Lambda_f$. We restrict our attention to self-similar nested lattices $\Lambda=a\Lambda_f$ where $a > 0$ is an integer. We denote $\Lambda / \Lambda_f = \{\blambda \in \Lambda_f : \blambda \in \mathcal{V}_0(\Lambda)\}$ as the fine lattice vectors contained in the fundamental cell of $\Lambda$.

\begin{definition}[Quantized Shared-Dither Lattice Transform Code (QSD-LTC); Fig.~\ref{fig:QSD-LTC}]
    A QSD-LTC is given by $(g_a, g_s, \Lambda, \Lambda_f)$, with mappings $g_a$, $g_s$ and nested lattices $\Lambda, \Lambda_f$. A random discrete dither $\hat{\bu} \sim \mathrm{Unif}(\Lambda / \Lambda_f)$ is shared between the encoder and decoder, and a continuous dither $\bu_f \sim \mathrm{Unif}(\mathcal{V}_0(\Lambda_f))$ uniform over $\Lambda_f$'s cell is at the decoder only. The QSD-LTC computes $\by = g_a(\bx)$, entropy codes $\bc = Q_{\Lambda}(\by - \hat{\bu})$, and the decoder outputs $\hat{\bx} = g_s(\bc + \hat{\bu}+s\bu_f)$. The rate of shared randomness is $R_c = \frac{1}{n} \log |\Lambda / \Lambda_f|$ bits, i.e., there are $2^{nR_c}$ possible shared dither vectors.
    \label{def:QSD}
\end{definition}


\begin{remark}
    QSD-LTC recovers PD-LTC and SD-LTC when $R_c = 0$ and $R_c = \infty$, respectively:
    \begin{itemize}[itemsep=0pt, topsep=0pt, noitemsep]
        \item When $R_c = 0$, we have that $\Lambda_f = \Lambda$, $\hat{\bu} = \bzero$, and $\bu_f \sim \mathrm{Unif}(\mathcal{V}_0(\Lambda))$, and therefore 
    \begin{equation}
        Q_{\Lambda}(\by-\hat{\bu})+\hat{\bu}+s\bu_f = Q_{\Lambda}(\by) + s\bu_f.
    \end{equation}
        \item When $R_c = \infty$, we have that $\Lambda_f = \mathbb{R}^n$, $\hat{\bu} \sim \mathrm{Unif}(\mathcal{V}_0(\Lambda))$, $\bu_f = \bzero$, and therefore 
    \begin{equation}
        Q_{\Lambda}(\by-\hat{\bu})+\hat{\bu}+s\bu_f = Q_{\Lambda}(\by - \hat{\bu}) + \hat{\bu}.    \end{equation}
    \end{itemize}
\end{remark}
The operational rate, $H(\bc|\hat{\bu})$, follows \eqref{eq:rate_shared_dither}, except replacing $\bu$ with $\hat{\bu}$. The training objective is $\min_{\theta} H(\bc|\hat{\bu}) + \lambda_1 \EE\bracket*{\dist\paran*{\bx, \hat{\bx}}} + \lambda_2 \delta(P_{\bx}, P_{\hat{\bx}})$.
Unlike SD-LTC, the additive channel equivalence does not apply, since the support of $\by-\hat{\bu} - Q_{\Lambda}(\by-\hat{\bu})$ is random and does not always equal $\Lambda / \Lambda_f$.

We note that the $R_c$ values that QSD-LTC may achieve are limited to $\log \Gamma$, where $\Gamma \in \mathbb{Z}^{+}$, a positive integer, is the nesting ratio of $\Lambda, \Lambda_f$ \citep[Ch.~8]{zamir2014lattice}. This is due to the structure of nested lattices. To achieve $R_c$ values between $0$ and $1$, a non-uniform distribution for the shared dither vector $\hat{\bu}$ would need to be employed; we leave this to future work. 

\begin{remark}
    \label{remark:RNG}
    One may ask whether infinite shared randomness can be obtained by sending a pseudorandom seed and drawing continuous dither vectors $\bu \sim \mathrm{Unif}(\mathcal{V}_0(\Lambda))$ from a random number generator (RNG) based on the seed. If one compresses a source realization $\bx$ to a bitstream $b$, a pseudorandom seed of $k$ bits would imply that only $2^k$ possible dither vectors could be used at the decoder to decode $b$; this is noted by \citet{hamdi2024the}. Therefore, sending a pseudorandom seed is insufficient to simulate infinite shared randomness; rather, it implements a scheme with finite shared randomness. Furthermore, finite shared randomness with a constant number of bits \emph{per dimension} is necessary to achieve the fundamental limits \citep{wagner2022rate}; this is impossible to satisfy with a random seed of a fixed number of $k$ bits for high-dimensional sources. In addition to having the capability of imposing a $R_c$ bits per dimension of shared randomness, QSD-LTC has the additional advantage of ensuring the dither vectors are drawn uniformly from the fine lattice vectors in the lattice cell $\mathcal{V}_0(\Lambda)$. These are spread out uniformly throughout $\mathcal{V}_0(\Lambda)$ due to the structure of nested lattices. In comparison, dither vectors drawn from a random seed have no guarantee on where they may land in $\mathcal{V}_0(\Lambda)$, and would depend on the RNG and seed used. As an example, for a poorly chosen seed and RNG, the dither vectors generated could all be concentrated near the center of $\mathcal{V}_0(\Lambda)$, which would effectively yield no shared randomness. Therefore, in settings where infinite shared randomness is impractical, QSD-LTC enables a structured way of using finite shared randomness.
\end{remark}




\textbf{On complexity.} SD-, PD-, and QSD-LTC rely on LQ. For a fixed lattice up to dimension 24 (Sec.~\ref{sec:related}), the closest codebook vector search (also used to generate dithers) can be performed at any rate in a fixed number of operations, and is significantly faster than that of VQ, which is exponential in the rate. For higher dimensions, polar lattices have complexity polynomial in the dimension.


\section{Achieving the Fundamental Limits}
\label{sec:theory}

We now theoretically analyze the performance of SD-LTC and PD-LTC on the Gaussian source, and describe the RDP tradeoff asymptotically achievable by SD-LTC and PD-LTC. While the operational rate and distortion are given by the per-dimension versions of those in Sec.~\ref{sec:method}, the operational perception used is per-dimension squared 2-Wasserstein distance. We leave full proofs to Sec.~\ref{sec:proofs}.

\subsection{Infinite Shared Randomness}


\begin{proposition}[RDP function for Gaussian source \citep{zhang2021universal}]
    Let $P_X = \mathcal{N}(0, \sigma^2)$, $\dist(x, \hat{x}) = (x-\hat{x})^2$, and $\delta(\mu, \nu) = W_2^2(\mu, \nu)$ be squared 2-Wasserstein distance. Then 
    \begin{equation}
        \begin{aligned}
            R(D, P) =
            \begin{cases}
                \frac{1}{2} \log \frac{\sigma^2 (\sigma - \sqrt{P})^2}{\sigma^2 (\sigma - \sqrt{P})^2 - \paran*{\frac{\sigma^2 + (\sigma - \sqrt{P})^2 - D}{2}}^2}, & 
                 \textrm{for}~\sqrt{P} < \sigma - \sqrt{|\sigma^2 - D|}, \\ 
                \max\cbracket*{\frac{1}{2}\log \frac{\sigma^2}{D}, 0}, & \textrm{for}~\sqrt{P} \geq \sigma - \sqrt{|\sigma^2 - D|}.
            \end{cases}
            \label{eq:RDP_gaussian}
        \end{aligned}
    \end{equation}
    \label{prop:RDPF_Gaussian}
\end{proposition}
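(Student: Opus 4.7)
The plan is to reduce the variational problem \eqref{eq:RDP} to an optimization over jointly Gaussian laws of $(X, \hat{X})$, and then solve the resulting constrained problem in closed form.

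\textbf{Step 1 (Gaussian reduction).} I would first argue that the minimum is attained by a jointly Gaussian distribution. Both constraints depend only on the first and second moments of $(X, \hat{X})$: the MSE constraint manifestly does, and by the Gelbrich inequality, $W_2^2(P_X, P_{\hat{X}}) \geq \EE[\hat{X}]^2 + (\sigma - \sqrt{\mathrm{Var}(\hat{X})})^2$ with equality when $P_{\hat{X}}$ is Gaussian. Hence for any feasible $P_{\hat{X}|X}$, the jointly Gaussian law with the same mean vector and covariance of $(X, \hat{X})$ is also feasible. Since $X$ is Gaussian, $h(X)$ is fixed, and among couplings with a given conditional covariance of $X$ given $\hat{X}$, the jointly Gaussian one maximizes $h(X|\hat{X})$ and thus minimizes $I(X; \hat{X}) = h(X) - h(X|\hat{X})$.

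\textbf{Step 2 (Parameterize).} By symmetry of $P_X$ about zero, the optimal $\hat{X}$ has zero mean, so I write $\hat{X} = aX + Z$ with $Z \sim \mathcal{N}(0, \sigma_Z^2)$ independent of $X$. Then $\sigma_{\hat{X}}^2 = a^2 \sigma^2 + \sigma_Z^2$, $I(X; \hat{X}) = \tfrac{1}{2}\log(\sigma_{\hat{X}}^2 / \sigma_Z^2)$, $\EE[(X-\hat{X})^2] = (1-a)^2 \sigma^2 + \sigma_Z^2$, and $W_2^2(P_X, P_{\hat{X}}) = (\sigma - \sigma_{\hat{X}})^2$. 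Restricting to $\sigma_{\hat{X}} \leq \sigma$ is without loss in the nontrivial regime, since any solution with $\sigma_{\hat{X}} > \sigma$ can be shrunk toward $\sigma$ while strictly decreasing $I$ without violating the constraints.

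\textbf{Step 3 (Solve the KKT system).} In the regime $\sqrt{P} < \sigma - \sqrt{|\sigma^2 - D|}$, both constraints are active at the optimum. Setting $\sigma_{\hat{X}} = \sigma - \sqrt{P}$ and $(1-a)^2 \sigma^2 + \sigma_Z^2 = D$ and subtracting the relation $a^2 \sigma^2 + \sigma_Z^2 = (\sigma - \sqrt{P})^2$ gives
\[
a = \frac{\sigma^2 + (\sigma - \sqrt{P})^2 - D}{2 \sigma^2},
\]
from which $\sigma_Z^2 = (\sigma - \sqrt{P})^2 - a^2 \sigma^2$. Substituting into $\tfrac{1}{2}\log(\sigma_{\hat{X}}^2 / \sigma_Z^2)$ and clearing $4\sigma^2$ from numerator and denominator yields the stated closed form. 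In the complementary regime $\sqrt{P} \geq \sigma - \sqrt{|\sigma^2 - D|}$, the perception constraint is slack: the classical rate-distortion optimizer $\hat{X} = (1 - D/\sigma^2) X + Z$ (or the trivial zero reconstruction if $D \geq \sigma^2$) already satisfies the Wasserstein constraint, and the standard Shannon expression $\max\{\tfrac{1}{2}\log(\sigma^2/D), 0\}$ is recovered.

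\textbf{Main obstacle.} The delicate step is the Gaussian reduction: I need to verify carefully that replacing a general feasible coupling by its Gaussian surrogate with matched moments preserves feasibility of \emph{both} constraints simultaneously. The key facts are that $W_2^2(P_X, P_{\hat{X}})$ depends only on the marginal of $\hat{X}$, and that among laws with fixed first two moments the Gaussian one minimizes $W_2$ against a Gaussian reference (Gelbrich). Once this reduction is in place, Steps 2--3 are routine algebra and a standard Lagrangian activity analysis on a two-parameter convex program.
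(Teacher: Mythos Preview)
The paper does not prove this proposition; it is quoted from \citet{zhang2021universal} and only the optimizer is recorded in Remark~\ref{remark:RDP_opt}. Your proposal is a correct self-contained derivation, and the optimizer you obtain (jointly Gaussian, $\hat{X}\sim\mathcal{N}(0,(\sigma-\sqrt{P})^2)$, covariance $a\sigma^2=\tfrac{1}{2}(\sigma^2+(\sigma-\sqrt{P})^2-D)$) matches exactly what the paper states in that remark.

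Two minor points to tighten in Step~1. First, your sentence ``among couplings with a given conditional covariance of $X$ given $\hat{X}$, the jointly Gaussian one maximizes $h(X|\hat{X})$'' is slightly imprecise: what you actually use is that for any coupling with the same second-order moments of $(X,\hat{X})$, one has $h(X|\hat{X})\le h(X-\alpha\hat{X})\le \tfrac{1}{2}\log(2\pi e\,\mathrm{Var}(X-\alpha\hat{X}))$, with equality in both when the pair is jointly Gaussian; this is what gives $I(X;\hat{X})\ge I(X;\hat{X}_G)$. Second, when you parameterize $\hat{X}=aX+Z$ you should record that $a\ge 0$ without loss of generality (replace $\hat{X}$ by $-\hat{X}$ otherwise, which preserves $I$ and the marginal while only decreasing the MSE). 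With these cosmetic fixes your Steps~2--3 go through verbatim.
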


\begin{remark}
\label{remark:jointGaussian}
    When $\sqrt{P} < \sigma - \sqrt{|\sigma^2 - D|}$, the optimal $\hat{X}$ in \eqref{eq:RDP} is jointly Gaussian with marginal $\hat{X} \sim \mathcal{N}(0, (\sigma - \sqrt{P})^2)$, and covariance $\theta = \max\cbracket*{\frac{1}{2}(\sigma^2 + (\sigma - \sqrt{P})^2 - D), 0}$. When $\sqrt{P} \geq \sigma - \sqrt{|\sigma^2 - D|}$, $\hat{X}$ is jointly Gaussian with marginal $\hat{X} \sim \mathcal{N}(0, \sigma^2 - D)$.
    \label{remark:RDP_opt}
\end{remark}


The following theorem shows that $R(D, P)$ is achievable with SD-LTCs, and is an extension of \citet{li2011distribution}, whose authors addressed the $P=0$ case.


\begin{theorem}[Optimality of SD-LTC for Gaussian source]
   Let $X_1,X_2,\dots \stackrel{\mathrm{i.i.d.}}{\sim} \mathcal{N}(0, \sigma^2)$. For any $P \in [0, \sigma^2], D \in [0, 2\sigma^2]$, there exists a sequence of SD-LTCs $\{(g_a^{(n)}, g_s^{(n)}, \Lambda^{(n)})\}_{n=1}^{\infty}$ such that 
    \begin{equation}
        \lim_{n\rightarrow \infty} \frac{1}{n} H(Q_{\Lambda^{(n)}}(g_a^{(n)}(X^n) - \bu)|\bu) = R(D, P),
    \end{equation}
    \begin{equation}
        \lim_{n\rightarrow \infty} \frac{1}{n} \EE\bracket*{\|X^n - \hat{X}^n\|_2^2} \leq D,
    \end{equation}
    \begin{equation}
        \lim_{n\rightarrow \infty} \frac{1}{n}W_2^2(P_{X^n}, P_{\hat{X}^n}) \leq P,
    \end{equation}
    where $\hat{X}^n = g_s^{(n)}(Q_{\Lambda^{(n)}}(g_a^{(n)}(X^n) - \bu) + \bu)$, and $\bu \sim \mathrm{Unif}(\mathcal{V}_0(\Lambda^{(n)}))$.
    \label{thm:SD-LTC}
\end{theorem}
\begin{remark}
    The proof of Thm.~\ref{thm:SD-LTC} relies on a sphere-bound-achieving sequence of lattices with scalar transforms $g_a$, $g_s$. As dimension grows, the dither $\bu$ becomes Gaussian-like, and the latent dithered LQ acts like an additive Gaussian channel \citep{zamir1996}, imposing a joint Gaussian relationship between $X$ and $\hat{X}$ as desired by the RDP solution; see Remark~\ref{remark:jointGaussian} and Fig.~\ref{fig:proof_fig}.
\end{remark}

\subsection{No Shared Randomness}
We now consider the case when the encoder and decoder do not have access to any shared randomness. For simplicity and ease of presentation, we consider the regime of near-perfect and perfect perception, corresponding to $P < \epsilon$ for any $\epsilon > 0$ and $P=0$, respectively. This is the only regime of perception where lower bounds on the RDP achievable under no shared randomness are known \citep{saldi2015output, wagner2022rate, chen2022rate, yan2021perceptual}, which corresponds to $R\paran*{\nicefrac{D}{2}, \infty}$, and evaluates to $\frac{1}{2}\log \frac{2\sigma^2}{D}$ on the i.i.d.~Gaussian source. The following theorem shows that $R\paran*{\nicefrac{D}{2}, \infty}$ is achievable with PD-LTCs under near-perfect perception, which can be easily extended to perfect perception (see Remark~\ref{remark:perfect_perception}). As discussed in Sec.~\ref{sec:related}, $R(D, 0) < R(\nicefrac{D}{2}, \infty)$; see Fig.~\ref{fig:n8_Gaussian} for a visualization.

\begin{figure*}[t]
    \centering
    \includegraphics[width=0.35\linewidth]{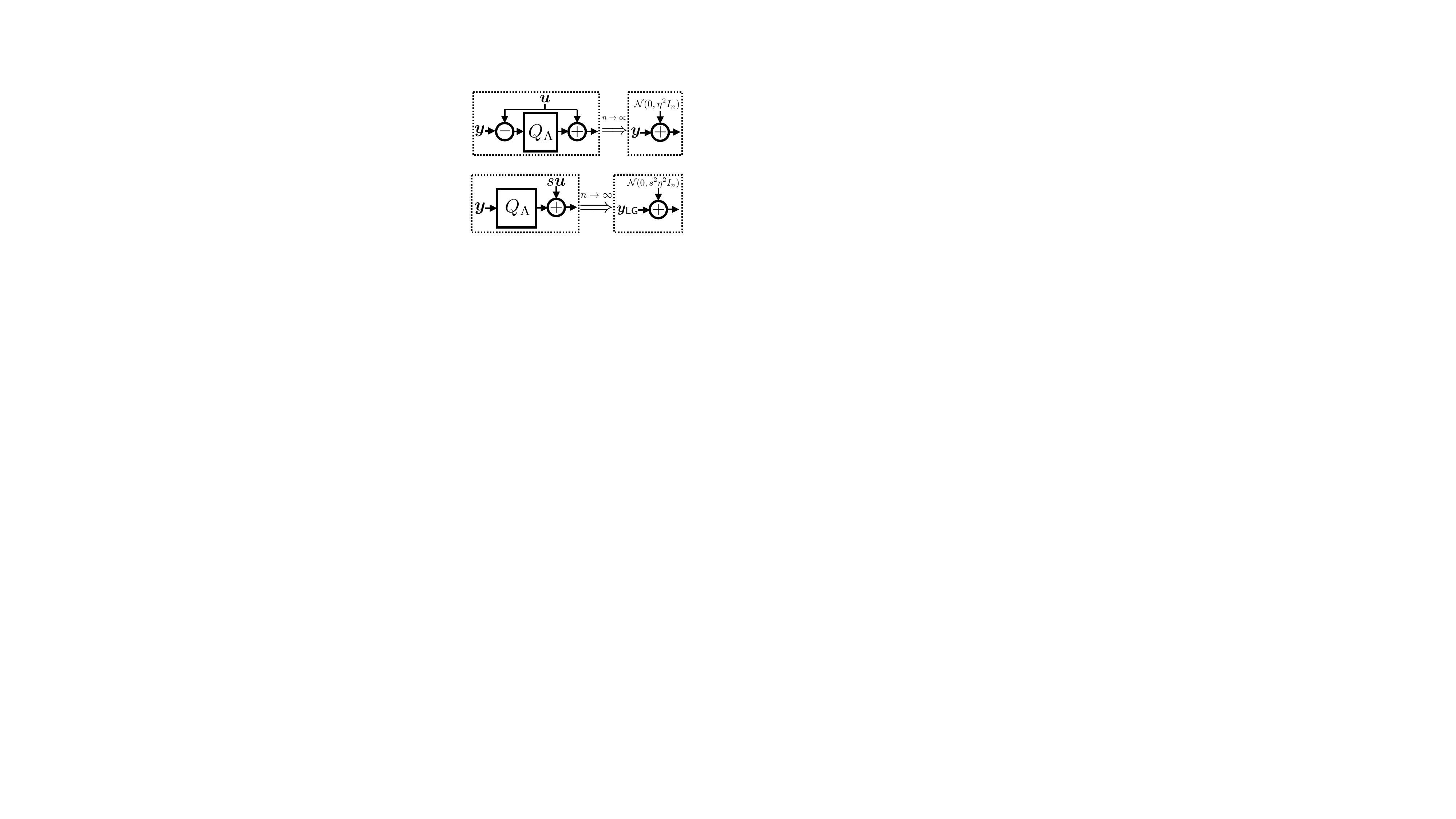}
    \qquad
    \includegraphics[width=0.38\linewidth]{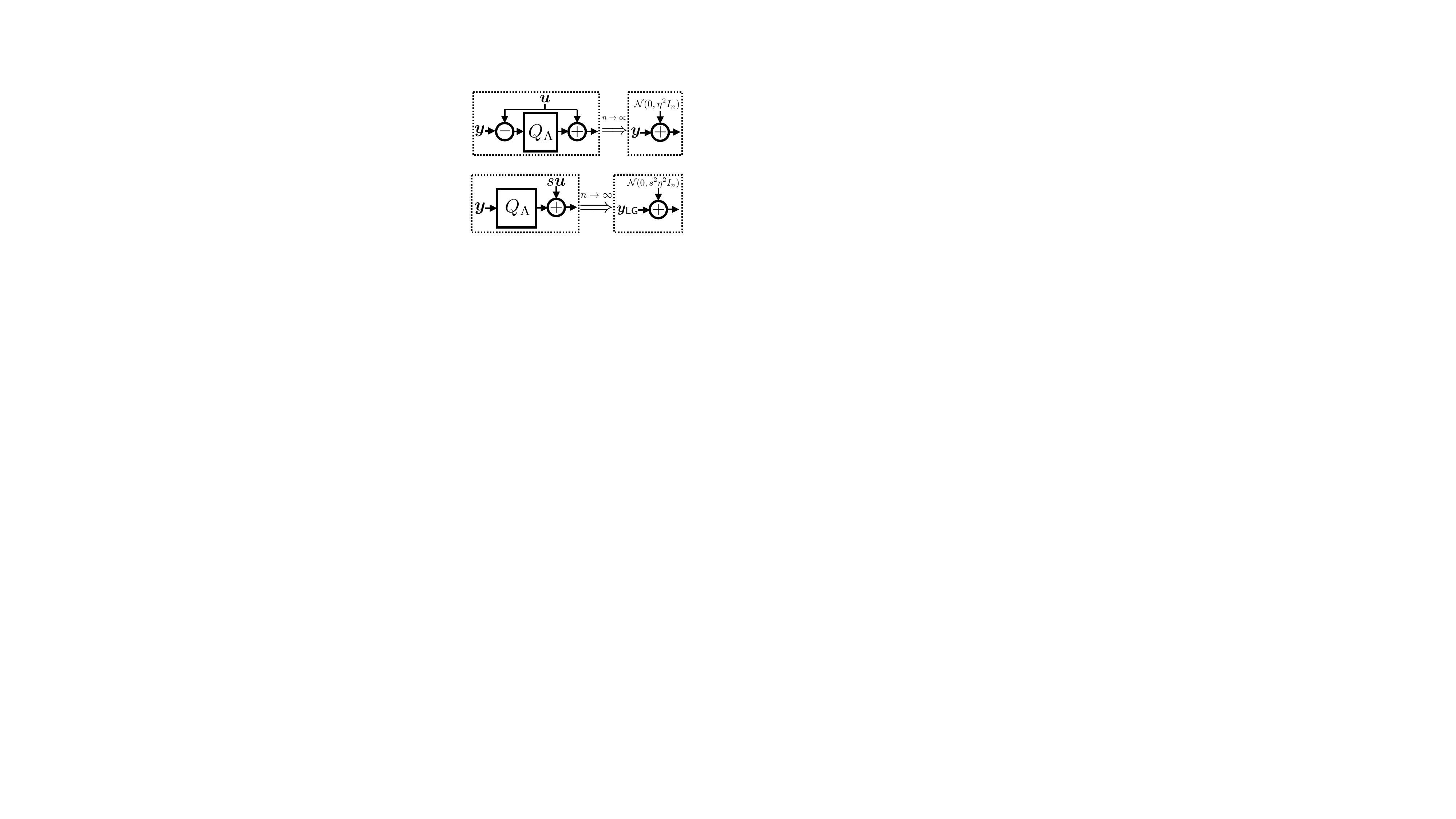}
    \vspace{-0.5em}
    \caption{In the latent space, SD-LTC (left) becomes AWGN-like. PD-LTC (right) models the sum of a lattice Gaussian and Gaussian; $s$ must be large enough for the sum to be Gaussian. }
    \label{fig:proof_fig}
    \vspace{-1em}
\end{figure*}


\begin{theorem}[Optimality of PD-LTC for Gaussian source]
    Let $X_1,X_2,\dots \stackrel{\mathrm{i.i.d.}}{\sim} \mathcal{N}(0, \sigma^2)$. For any $D$ satisfying $0 < D \leq 2\sigma^2$, there exists a sequence of PD-LTCs $\{(g_a^{(n)}, g_s^{(n)}, \Lambda^{(n)})\}_{n=1}^{\infty}$ such that 
    \begin{equation}
            \lim_{n \rightarrow \infty} \frac{1}{n} H(Q_{\Lambda^{(n)}}(g_a^{(n)}(X^n))) \leq R\paran*{\nicefrac{D}{2}, \infty},
    \end{equation}
    \begin{equation}
        \lim_{n \rightarrow \infty} \frac{1}{n} \mathbb{E}[\|X^n - \hat{X}^n\|^2] \leq D,
    \end{equation}
    \begin{equation}
        \lim_{n \rightarrow \infty} \frac{1}{n}W_2^2(P_{X^n}, P_{\hat{X}^n}) =0,
        \label{eq:PD-LTC-perception}
    \end{equation}

    where $\hat{X}^n = g_s^{(n)}(Q_{\Lambda^{(n)}}(g_a^{(n)}(X^n)) + s\bu)$, $\bu \sim \mathrm{Unif}(\mathcal{V}(\Lambda^{(n)}))$, and $s = \frac{\sigma}{\sqrt{\sigma^2 - \nicefrac{D}{2}}}$. 

    
    \label{thm:PD-LTC}
\end{theorem}


\begin{remark}
        PD-LTC does not use a shared dither; we cannot make use of classical results \citep{zamir1996} to analyze the statistical behavior of latent LQ as an additive channel (as opposed to Thm.~\ref{thm:SD-LTC}). Instead, Thm.~\ref{thm:PD-LTC} relies on lattice Gaussian coding methods \citep{ling2014achieving}; see Sec.~\ref{sec:PD-LTC-proof} for details. Using scalar transforms, $Q_{\Lambda}(g_a(\bx))$ behaves like a lattice Gaussian. The additive private dither $s\bu$ (which becomes Gaussian-like) makes the reconstruction Gaussian-like (Fig.~\ref{fig:proof_fig}). It then suffices for $g_s$ to scale $\hat{X}^n$ to impose the desired variance $\sigma^2$. In classical rate-distortion, dithered LQ and lattice Gaussian coding can both achieve optimality, so their differences are primarily practical (e.g., restrictions on lattice families, availability of a shared dither). Thms.~\ref{thm:SD-LTC}, \ref{thm:PD-LTC} show that with a perception constraint, the two proof techniques lead to different fundamental limits as well. 
\end{remark}

\begin{remark}
    We note that the choice of $s = {\sigma}/{\sqrt{\sigma^2 - \nicefrac{D}{2}}} > 1$ depends on the regime of the rate-distortion tradeoff; it is approximately $1$ for small $D$, and becomes large for $D \rightarrow 2\sigma^2$. At low rates, this means the effective dither $s\bu$ ``leaks'' outside the lattice cell rather significantly. This is required to ensure the perception term vanishes in \eqref{eq:PD-LTC-perception}. Specifically, $s\bu$ needs to be large enough relative to the quantization steps of $Q(g_a^{(n)}(X^n))$ for $Q_{\Lambda^{(n)}}(g_a^{(n)}(X^n)) + s\bu$ to approximate a Gaussian of covariance $\sigma^2 I_n$ arbitrarily accurately. The nature of this approximation is based on the flatness factor \citep[Def.~5]{ling2014semantically} used in lattice Gaussian coding, which is elaborated on in Remark~\ref{remark:flatness_s}.
\end{remark}

\section{Experimental Results}
\label{sec:results}

\begin{wrapfigure}{O}{0.45\linewidth}
    \centering
    \includegraphics[width=\linewidth]{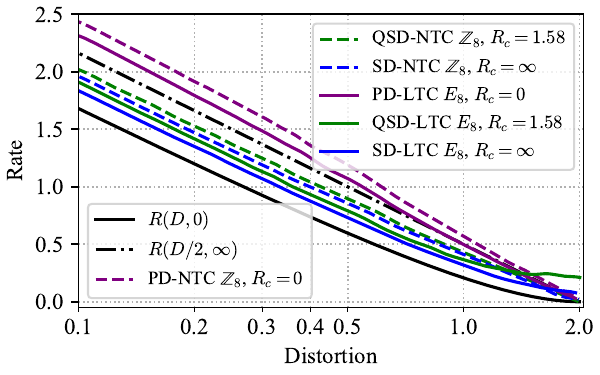}
    \vspace{-1.8em}
    \caption{Effect of lattice choice and shared randomness on Gaussians at $P=0$.}
    \label{fig:n8_Gaussian}
\end{wrapfigure}

\textbf{Experimental setup}. We denote $n$ the source dimension and $n_L$ the latent space dimension. We denote LTC as NTC when the lattice is chosen to be the integer lattice, i.e., $\Lambda = \mathbb{Z}_{n_L}$. We train the PD-, SD- and QSD-LTC models using their RDP objectives discussed in Sec.~\ref{sec:method}. For the rates reported at test time, we use the \eqref{eq:rate_shared_dither} version of $H(\bc|\bu)$ for SD-LTC, $\EE_{\by}[-\log p_{\hat{\by}}(\hat{\by})]$ for PD-LTC, and $H(\bc|\hat{\bu})$ for QSD-LTC; all require hard quantization. 
These rates are cross-entropy upper bounds on the true entropy, due to the learned $p_{\by}$ density.
We use MSE distortion $\dist(\bx, \hat{\bx}) = \frac{1}{n}\|\bx-\hat{\bx}\|_2^2$. For perception, to obtain reliable estimates in higher dimensions, we use squared sliced Wasserstein distance \citep{SWD} of order 2,  $\delta(P_{\bx}, P_{\hat{\bx}}) = \frac{1}{n}\mathsf{SW}_2^2(P_{\bx}, P_{\hat{\bx}})$. During training, we use the straight-through estimator with hard quantization. See Sec.~\ref{sec:experiment_details} for further details on architecture/training.

\subsection{Synthetic Sources}
We first evaluate the i.i.d. Gaussian source of dimension $n=8$. 
In Fig.~\ref{fig:n8_Gaussian}, we plot the equi-perception curves with $P=0$ to compare the methods under the perfect realism setting. As shown, for a fixed amount of shared randomness, a more efficient lattice improves performance. Analogously, for a fixed lattice, more shared randomness improves performance. We additionally evaluate QSD-LTC with finite shared randomness. The fine lattice $\Lambda_f$ in Def.~\ref{def:QSD} is set to be self-similar with $\Lambda$, with a nesting ratio of $3$. This results in $R_c = \log 3$ bits per dimension. As shown, the QSD-LTC performance is nearly able to achieve that of SD-LTC, despite not having infinite shared randomness; this verifies that QSD-LTC allows one to interpolate between SD-LTC and PD-LTC. At low rates, the LTC with $E_8$ has some suboptimality; this is due to the Monte-Carlo estimation when computing the integral for latent likelihoods \citep{lei2024approaching}. 

When compared to the fundamental limits described in Sec.~\ref{sec:theory}, we see that the PD models are lower bounded by $R(\nicefrac{D}{2}, \infty)$, and the SD models are are lower bounded by the RDP function $R(D, 0)$ but outperform $R(\nicefrac{D}{2}, \infty)$ when the rate is not too small. 
Although the fundamental RDP limits should be interpreted operationally with perception measured as Wasserstein and not sliced Wasserstein, we empirically verify that on Gaussians, sliced Wasserstein is faithful to Wasserstein in Appendix.~\ref{sec:SWD}. The performance of $n=1$ RCC (i.e., each dimension of $\bx$ is compressed with RCC separately) and $n=8$ RCC is shown in Fig.~\ref{fig:gaussian_RCC}. Performance improves with increasing dimension. However, the 8-dimensional RCC is outperformed by SD-LTC, and additionally has complexity exponential in dimension and rate; SD-LTC does not suffer the same high complexity. We show the RDP achieved by deterministic NTC in Fig.~\ref{fig:Gaussian_deterministic}; at low rates the lack of randomness prevents it from enforcing the perception constraint. At larger rates, its performance coincides with SD-NTC.

\subsection{Real-World Sources}

\begin{figure}[t]
    \centering
    \includegraphics[width=0.32\linewidth]{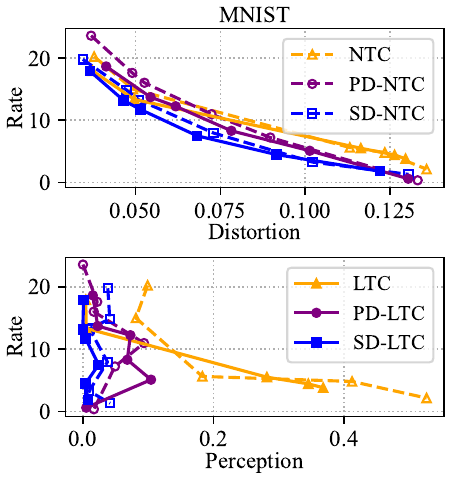}
    \hfill
    \includegraphics[width=0.32\linewidth]{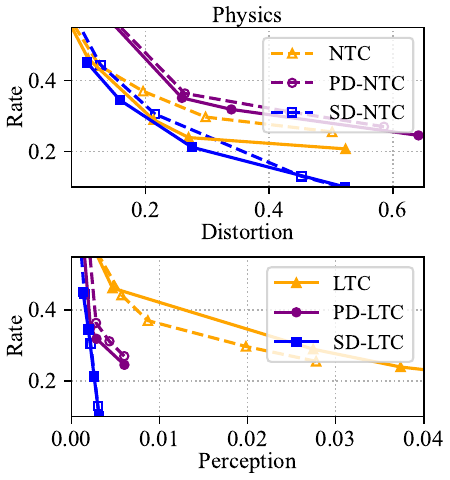}
    \hfill
    \includegraphics[width=0.32\linewidth]{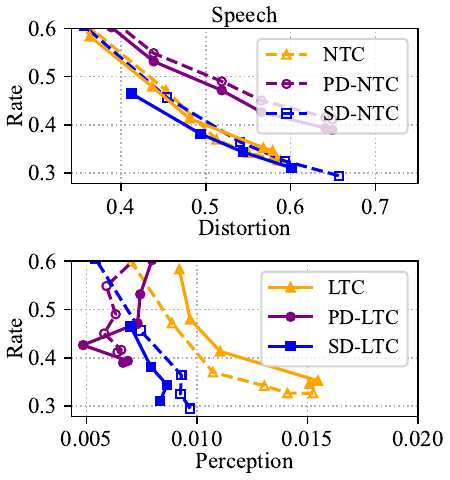}
    \vspace{-0.5em}
    \caption{LTC, PD-LTC, and SD-LTC on real-world sources; R-D (top) and R-P (bottom).}
    \label{fig:real_1}
    \vspace{-1em}
\end{figure}

\begin{figure}[t]
    \centering
    \includegraphics[width=0.32\linewidth]{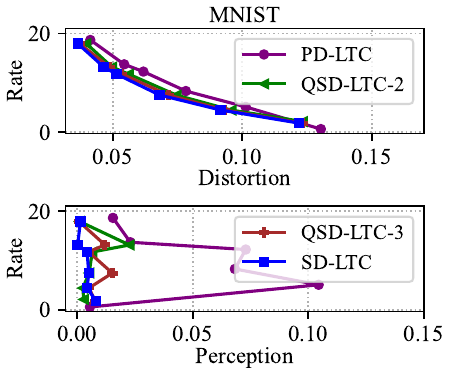}
    \hfill
    \includegraphics[width=0.32\linewidth]{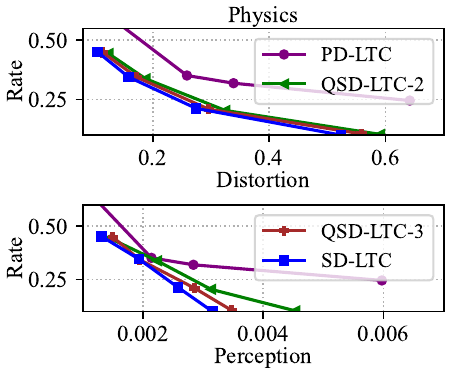}
    \hfill
    \includegraphics[width=0.32\linewidth]{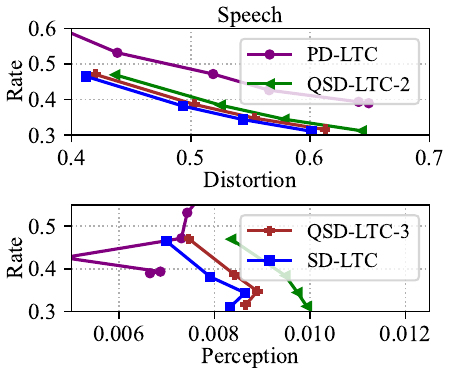}
    \vspace{-0.5em}
    \caption{QSD-LTC-$\Gamma$ ($\Gamma$ the nesting ratio) on real-world sources; R-D (top) and R-P (bottom).}
    \label{fig:real_2}
    \vspace{-1em}
\end{figure}

We use MNIST \citep{MNIST}, Physics and Speech datasets \citep{yang2021towards}. These contain grayscale images of dimension $28 \times 28$, physics measurements of dimension $16$, and audio signals of dimension $33$ respectively. We use a latent dimension of $n_L=8$ for the first two and $n_L=16$ for Speech. We use the integer lattice for NTC models; for MNIST/Physics, we use the $E_8$ lattice, and for Speech, we use the $\Lambda_{16}$ lattice \citep{Barnes_Wall_1959}. 
The corresponding RDP tradeoff is shown in Fig.~\ref{fig:real_1}. Due to lack of randomness, deterministic NTC and LTC are unable to enforce the perception constraint at lower rates, no matter how large $\lambda_2$ in \eqref{eq:NTC_realism} is set. Similar to the Gaussian case, performance improves with better lattices and increased shared randomness, demonstrating the benefits of lattices and shared randomness described in the prior sections translate to real-world sources that require nonlinear transforms. For QSD-LTC, we use self-similar nested lattices with a nesting ratios of 2 and 3, corresponding to $R_c = \log 2 = 1$ and $R_c = \log 3 \approx 1.58$ respectively. Performance increases with $R_c$ (Fig.~\ref{fig:real_2}).  A full comparison across all models/datasets is shown in Figs.~\ref{fig:MNIST_everything}, \ref{fig:Physics_everything} and \ref{fig:Speech_everything}. Overall, SD-LTC achieves the RDP tradeoffs. For a fixed lattice, QSD-LTC, with $R_c=\log 3$, can nearly achieve the performance of SD-LTC, which uses $R_c=\infty$.

\subsection{Ablation Study}
We perform an ablation study on lattice choice and training of PD-LTC with STE or the noisy proxy. We use the $D_n^*$ lattice for SD-LTC in Fig.~\ref{fig:Speech_Dn} on Speech. Its performance lies between integer and Barnes-Wall lattices, which aligns with the fact that the $D_n$ packing efficiency lies between those two lattices. This supports our result in Thm.~\ref{thm:SD-LTC} that performance is optimal when the lattices pack space more efficiently. For the noisy proxy, we use it to train PD-LTC, but this may result in a train/test mismatch, since unlike SD-LTC, the noisy proxy does not equal the rate under hard quantization. A comparison of the two in Fig.~\ref{fig:Speech_STE} shows there is not much difference in the resulting performance.


\section{Conclusion and Limitations}
\label{sec:conclusion}

We investigate low-complexity, high-performing compressors for the RDP tradeoff. We propose combining dithered LQ with neural compression, which supports different amounts of shared randomness. Under infinite and no shared randomness, we show SD-LTC and PD-LTC achieve optimality on the Gaussian source. We empirically verify that performance improves with increased shared randomness and improved lattice efficiency across synthetic and real-world data. Future work may address: expanding the range of $R_c$ in QSD-LTC, its theoretical analysis, and extension to SOTA image compression architectures, which would require careful integration of random dithering with hyperprior models that implicitly use a deterministic dither.


\bibliography{ref.bib}
\bibliographystyle{abbrvnat}


\appendix

\setlength\intextsep{10pt}

\newpage
\section{Reverse Channel Coding}
\label{sec:RCC}

\subsection{RCC Preliminaries}
Reverse channel coding (RCC), also known as channel simulation, devises schemes for sending a sample from a prescribed distribution with a rate constraint \citep{cuff2013distributed, li2024channel}, and typically assumes shared randomness. Since RDP requires randomized reconstructions, RCC can be applied to RDP, and in fact is a technique used to prove RDP coding theorems; \citet{theis2021coding} uses the one-shot RCC scheme of \citet{li2018strong}, and \citet{saldi2015output, wagner2022rate} uses channel synthesis results of \citet{cuff2013distributed}. We focus our discussion here on the particular RCC scheme of \citet{li2018strong}, defined below.
\begin{definition}[One-shot reverse channel coding (RCC) via the Poisson functional representation \citep{li2018strong}]
    Let $X \sim P_X$ be the source, and $P_{\hat{X}|X}$ be a channel we wish to simulate. Define $Q_{\hat{X}}$ to be the $\hat{X}$-marginal of the joint distribution $P_X P_{\hat{X}|X}$. Suppose that the same sequence of $\tau_1,\tau_2,\dots \sim \mathrm{Exp}(1)$ and $\hat{X}_1,\hat{X}_1,\dots \stackrel{\mathrm{i.i.d.}}{\sim} Q_{\hat{X}}$ are generated at both the encoder and decoder (requiring infinite shared randomness $U$). Let $W_i = \sum_{j=1}^i \tau_j$. Given a source realization $X=x$, the encoder computes
    \begin{equation}
        K = \argmin_i W_i \frac{dQ_{\hat{X}}}{dP_{\hat{X}|X}(\cdot | x)}(\hat{X}_i),
        \label{eq:PFR}
    \end{equation}
    and entropy-codes it. The decoder simply outputs $\hat{X}_K$. By \citet{li2018strong}, it holds that $\hat{X}_K|\{X=x\} \sim P_{\hat{X}|X}(\cdot|x)$. Additionally, the rate satisfies 
    \begin{equation}
        H(K|U) \leq I(X;\hat{X}) + \log(I(X;\hat{X})+1)+4.
        \label{eq:RCC_rate}
    \end{equation}
    \label{def:RCC}
\end{definition}

\begin{remark}
    The one-shot RCC technique above enables immediate achievability results for informational quantities, such as the rate-distortion-perception function $R(D,P)$, by simulating the channel $P_{\hat{X}|X}$ that achieves the infimum in \eqref{eq:RDP}. Due to the fact that the reconstruction $\hat{X}_K$ has distribution equal to the channel $P_{\hat{X}|X}$, any constraint in the informational quantity, such as the expected distortion constraint, or the perception constraint, is automatically satisfied by RCC. The $I(X;\hat{X})$ term in \eqref{eq:RCC_rate} then becomes equal to the informational quantity $R(D, P)$. Applying this scheme to i.i.d. blocks yields the asymptotic result. This is the approach used to prove achievability in \citet{theis2021coding}.
\end{remark}

\subsection{RCC as Randomized VQ}

The scheme in Def.~\ref{def:RCC} essentially chooses a random codebook at both the encoder and decoder. The encoder chooses a codeword $\hat{X}_K$ according to the criterion in \eqref{eq:PFR}, which may appear abstract at first glance. However, the following two propositions show that when the channel $P_{\hat{X}|X}$ used to simulate is chosen to be RD- or RDP-achieving, it becomes clear that \eqref{eq:PFR} uses a minimum-distance codebook search similar to VQ.

\begin{proposition}[RCC on the rate-distortion-achieving channel; Prop.~1 of  \citet{lei2022neural}]
    Let $P_{\hat{X}|X}$ be the rate-distortion-achieving channel, i.e., the channel achieving the infimum of $R(D, \infty)$. Then the density ratio satisfies
    \begin{equation}
        \frac{dP_{\hat{X}|X}(\cdot|x)}{dQ_{\hat{X}}}(\hat{x}) = \frac{e^{-\beta \dist(x,\hat{x})}}{\EE_{\hat{X}'\sim Q_{\hat{X}}}[e^{-\beta \dist(x,\hat{X}')}]},
        \label{eq:RD_density_ratio}
    \end{equation}
    and \eqref{eq:PFR} is equivalent to 
    \begin{equation}
        K = \argmin_i \dist(x, \hat{X}_i) + \frac{1}{\beta} \ln W_i,
        \label{eq:K_RD}
    \end{equation}
    where $\beta > 0$ is the unique Lagrange multiplier determining $I(X;\hat{X}) = R(D, \infty)$ at $D = \EE_{P_X P_{\hat{X}|X}}[\dist(X, \hat{X})]$. 
\end{proposition}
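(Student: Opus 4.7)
The plan is to verify the two assertions in order: first the Gibbs form of the density ratio, then the algebraic reduction of the PFR selection rule to a minimum-distortion-plus-random-regularizer criterion.

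For the density ratio, I would start from the variational characterization of $R(D, \infty)$ as the convex program $\min_{P_{\hat{X}|X}} I(X;\hat{X})$ subject to $\EE[\dist(X,\hat{X})] \leq D$ together with the unit-mass constraint on each conditional distribution. Forming the Lagrangian with multiplier $\beta \geq 0$ on the distortion constraint and pointwise multipliers for unit-mass, the first-order stationarity condition (taking the functional derivative of $I(X;\hat{X}) + \beta\,\EE[\dist(X,\hat{X})]$ with respect to $P_{\hat{X}|X}(\hat{x}|x)$ while holding the induced marginal $Q_{\hat{X}}$ fixed at the optimizer) yields
\begin{equation*}
    \log \frac{dP_{\hat{X}|X}(\cdot|x)}{dQ_{\hat{X}}}(\hat{x}) = -\beta\,\dist(x,\hat{x}) + c(x),
\end{equation*}
where $c(x)$ is determined by the requirement that $P_{\hat{X}|X}(\cdot|x)$ is a probability measure. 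Solving for $c(x)$ gives $c(x) = -\log \EE_{\hat{X}'\sim Q_{\hat{X}}}[e^{-\beta \dist(x,\hat{X}')}]$, which rearranges to \eqref{eq:RD_density_ratio}.

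For the second assertion, substitute \eqref{eq:RD_density_ratio} directly into the PFR rule \eqref{eq:PFR}. Writing $Z(x) := \EE_{\hat{X}'\sim Q_{\hat{X}}}[e^{-\beta \dist(x,\hat{X}')}]$, the reciprocal density ratio appearing in \eqref{eq:PFR} is $\frac{dQ_{\hat{X}}}{dP_{\hat{X}|X}(\cdot|x)}(\hat{X}_i) = Z(x)\,e^{\beta\,\dist(x,\hat{X}_i)}$, so
\begin{equation*}
    K = \argmin_i W_i\, Z(x)\, e^{\beta\,\dist(x,\hat{X}_i)}.
\end{equation*}
Because $Z(x)$ does not depend on $i$, and both $\log(\cdot)$ and division by $\beta > 0$ are monotone, taking logarithms and rescaling by $1/\beta$ preserves the argmin and delivers \eqref{eq:K_RD}.

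The main obstacle is justifying the Lagrangian stationarity step rigorously. One needs convexity (true, since $I(X;\hat{X})$ is convex in $P_{\hat{X}|X}$ for fixed $P_X$) together with Slater's condition or explicit existence of a feasible interior point to invoke strong duality and conclude uniqueness of the Lagrange multiplier $\beta$ at the prescribed $D$. On the regimes of interest (distortions $D$ strictly below the unconstrained minimum, so $\beta > 0$), this is standard and can be imported from Csisz\'ar's treatment of rate-distortion; care is only needed to formulate the stationarity condition at the level of Radon--Nikodym derivatives rather than densities, and to handle the boundary case $\beta = 0$ (which is excluded here). Once the Gibbs form is established, the reduction to \eqref{eq:K_RD} is purely algebraic and routine.
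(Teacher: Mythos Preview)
The paper does not supply its own proof of this proposition; it is quoted verbatim from \citet{lei2022neural} and simply cited. There is therefore nothing in the present paper to compare your argument against line by line.

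That said, your plan is correct and is the standard route. The Gibbs form \eqref{eq:RD_density_ratio} is exactly the KKT/stationarity characterization of the rate-distortion-optimal channel (the fixed-point underlying Blahut--Arimoto), and your Lagrangian derivation recovers it cleanly; your caveats about convexity, strong duality, and the active-constraint regime $\beta>0$ are the right ones to flag. The reduction to \eqref{eq:K_RD} is, as you say, pure algebra: substitute the reciprocal density ratio into \eqref{eq:PFR}, drop the $i$-independent normalizer $Z(x)$, take $\log$, and divide by $\beta$. This matches how the companion result (the RDP version, Prop.~\ref{prop:RDP_RCC}) is proved in the paper, which states only that ``the proof follows \citet[Prop.~1]{lei2022neural}, using \eqref{eq:f_div_density_ratio} instead of \eqref{eq:RD_density_ratio}'' --- i.e., exactly the substitution-then-log step you wrote out.
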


\begin{proposition}[RCC on the RDP-achieving channel with $f$-divergence perception]
    Assume that the perception is measured by $\delta(P_X, P_{\hat{X}}) = D_f(P_X || P_{\hat{X}})$, a $f$-divergence. Let $P_{\hat{X}|X}$ be the RDP-achieving channel that achieves \eqref{eq:RDP}. \citet{serra2023computation} shows that the density ratio satisfies 
    \begin{equation}
        \frac{dP_{\hat{X}|X}(\cdot|x)}{dQ_{\hat{X}}}(\hat{x}) = \frac{e^{-\beta_1 \dist(x,\hat{x}) - \beta_2 g(P_X, Q_{\hat{X}}, \hat{x})}}{\EE_{\hat{X}'\sim Q_{\hat{X}}}\bracket*{e^{-\beta_1 \dist(x,\hat{X}') - \beta_2 g(P_X, Q_{\hat{X}}, \hat{X}')}}},
        \label{eq:f_div_density_ratio}
    \end{equation}
    where $g(P_X, Q_{\hat{X}}, \hat{x}') := f\paran*{\frac{dP_X}{dQ_{\hat{X}}}(\hat{x})} - \frac{dP_X}{dQ_{\hat{X}}}(\hat{x}) \partial f\paran*{\frac{dP_X}{dQ_{\hat{X}}}(\hat{x})}$, and $\beta_1, \beta_2 > 0$ are the unique Lagrange multipliers determining $I(X;\hat{X}) = R(D, P)$, at $D = \EE_{P_X P_{\hat{X}|X}}[\dist(X, \hat{X})]$ and $P = D_f(P_X || P_{\hat{X}})$. Therefore, \eqref{eq:PFR} is equivalent to 
    \begin{equation}
        K = \argmin_i \dist(x, \hat{X}_i) + \frac{\beta_2}{\beta_1} g(P_X, Q_{\hat{X}}, \hat{X}_i) + \frac{1}{\beta_1}\ln W_i.
        \label{eq:K_RDP}
    \end{equation}
    \label{prop:RDP_RCC}
\end{proposition}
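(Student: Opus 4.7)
The plan is to treat this as a direct algebraic manipulation on top of the density ratio formula attributed to \citet{serra2023computation}. Since the formula for $dP_{\hat{X}|X}/dQ_{\hat{X}}$ is taken as given, the only work is to substitute its reciprocal into the Poisson functional representation rule \eqref{eq:PFR} and show the resulting objective reduces, after a monotone transformation, to \eqref{eq:K_RDP}.

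First, I would invert \eqref{eq:f_div_density_ratio} to obtain
\begin{equation*}
    \frac{dQ_{\hat{X}}}{dP_{\hat{X}|X}(\cdot|x)}(\hat{X}_i) = Z(x)\, e^{\beta_1 \dist(x,\hat{X}_i) + \beta_2 g(P_X, Q_{\hat{X}}, \hat{X}_i)},
\end{equation*}
where $Z(x) := \EE_{\hat{X}'\sim Q_{\hat{X}}}[e^{-\beta_1 \dist(x,\hat{X}') - \beta_2 g(P_X, Q_{\hat{X}}, \hat{X}')}]$ is a positive constant depending only on $x$. Multiplying by $W_i > 0$ and substituting into \eqref{eq:PFR} yields the product being minimized; since the logarithm is strictly increasing on positive reals and $Z(x)$ does not depend on $i$, the argmin is preserved when we replace the objective by
\begin{equation*}
    \ln W_i + \beta_1 \dist(x, \hat{X}_i) + \beta_2 g(P_X, Q_{\hat{X}}, \hat{X}_i),
\end{equation*}
dropping the $\ln Z(x)$ term. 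Dividing by $\beta_1 > 0$ (which again preserves the argmin since $\beta_1$ does not depend on $i$) gives exactly \eqref{eq:K_RDP}.

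There is essentially no hard step here: the reasoning is purely bookkeeping once the density ratio from \citet{serra2023computation} is in hand. The two minor points to check are (i) that $\beta_1 > 0$, which is ensured by the fact that it is the KKT multiplier associated with the active distortion constraint in \eqref{eq:RDP} (so that the sign of the inequality is not flipped when dividing), and (ii) that the density ratio is almost surely positive so that taking logarithms is legitimate, which follows from the Gibbs form of the numerator in \eqref{eq:f_div_density_ratio}. Beyond these sanity checks, the proof is a direct substitution, and the value of the proposition is interpretive rather than technical: it exhibits the RCC encoder as a minimum-distortion search over the random codebook $\{\hat{X}_i\}$, regularized by both the perception-aware term $g(P_X, Q_{\hat{X}}, \hat{X}_i)$ and the rate-dependent penalty $\beta_1^{-1}\ln W_i$, thereby paralleling the classical rate-distortion case \eqref{eq:K_RD} of \citet{lei2022neural}.
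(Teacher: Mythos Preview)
Your proposal is correct and matches the paper's own proof, which simply states that the argument follows \citet[Prop.~1]{lei2022neural} with the density ratio \eqref{eq:f_div_density_ratio} in place of \eqref{eq:RD_density_ratio}. Your write-up is in fact more explicit than the paper's one-line reference, spelling out the invert--log--drop-constants--divide-by-$\beta_1$ chain that is left implicit there.
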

\begin{proof}
    The proof follows \citet[Prop.~1]{lei2022neural}, using \eqref{eq:f_div_density_ratio} instead of \eqref{eq:RD_density_ratio}. 
\end{proof}

\begin{remark}
    The above two propositions imply that RCC with the RD- or RDP-achieving channels result the encoder searching for the random codeword in $\{\hat{X}_1, \hat{X}_2,\dots\}$ that is closest to the source realization $x$ in terms of the distortion metric $\dist$, regularized by $\ln W_i$, which enforces the rate constraint; for the RDP case, it is additionally regularized by $g(P_X, Q_{\hat{X}}, \hat{X}'_i)$ which enforces the perception constraint.
\end{remark}

Thus, when taking distortion into account, RCC can be seen as a sort of randomized VQ that finds the minimum-distance codeword (which are random). Despite its optimality for RDP, RCC is of high complexity (exponential in rate and dimension), and requires infinite shared randomness. 

\begin{remark}
    A closed-form for the density ratio of the RDP-achieving channel $P_{\hat{X}|X}$ for when the perception is measured by squared 2-Wasserstein (which is the focus of our paper) is currently not known. However, we conjecture that similar to \eqref{eq:RD_density_ratio} and \eqref{eq:f_div_density_ratio}, it will consist of a $e^{-\beta' \dist(x, \hat{x})}$ term in the numerator, which will result in \eqref{eq:PFR} having $\dist(x, \hat{X}_i)$ in the objective. Another slight discrepancy to our setup is that the $n$-letter operational perception in \citet{serra2023computation} for the $f$-divergence perception RDP function is measured in the weak-sense (see Sec.~\ref{sec:related}). We are focused on the strong-sense setting as it more faithfully describes practical usage.
\end{remark}

\subsection{RCC Implementation Details}
To simulate RCC, one can implement the scheme in Def.~\ref{def:RCC} and use \eqref{eq:K_RD} or \eqref{eq:K_RDP} for finding the index to entropy-code. For the Gaussian source, the RDP-achieving channel and output marginal is given in Remark~\ref{remark:RDP_opt}; we can directly implement \eqref{eq:PFR} in closed-form since these distributions are Gaussian. This is what is done for the results in Fig.~\ref{fig:gaussian_RCC}. Since it is not possible to generate an infinite number of samples $\hat{X}_i$, we generate a codebook $N=10,000$ samples instead. Following \citet{li2018strong}, the index $K$ is entropy-coded using a Zipf distribution with parameter $\lambda = 1 + 1 / (I(X;\hat{X} + e^{-1}\log e + 1)$. A full algorithm describing the encoding and decoding process can be found in \citet{theis2022algorithms} as well as \citet{lei2022neural}.

\section{Additional Experimental Details}
\label{sec:experiment_details}
For the synthetic (i.i.d. Gaussian source), we set $n_L=8$, $g_a$ and $g_s$ to be linear functions, as the constructions in Thm.~\ref{thm:SD-LTC} and Thm.~\ref{thm:PD-LTC} suggest this to be sufficient for optimality, and use the $\mathbb{Z}_8$ and $E_8$ lattices. To cover the RDP tradeoff, we sweep a variety of $\lambda_1, \lambda_2$ values. For the Speech and Physics datasets, we use MLPs for $g_a$ and $g_s$ of depth 3, hidden dimension 100, and softplus nonlinearities. For MNIST, we follow the same exact experimental setup of \citet{blau2019rethinking}, including model architecture, and using the test Wasserstein distance via the discriminator neural network. For NTC models, we use the factorized $p_{\by}$ of \citet{balle2018variational}, and for LTC models, we use the RealNVP normalizing flow \citep{dinh2017density}. All models are trained for 100 epochs on the training data split, and reported metrics are averaged over the test split. Training is performed on a NVIDIA RTX5000 GPU. 

\section{Additional Empirical Results}
Figures pertaining to the experimental evaluation in Sec.~\ref{sec:results}, such as ablation studies, are shown here. 

\begin{figure}[!h]
    \centering
    \begin{subfigure}[b]{0.49\linewidth}
        \centering
        \includegraphics[width=\linewidth]{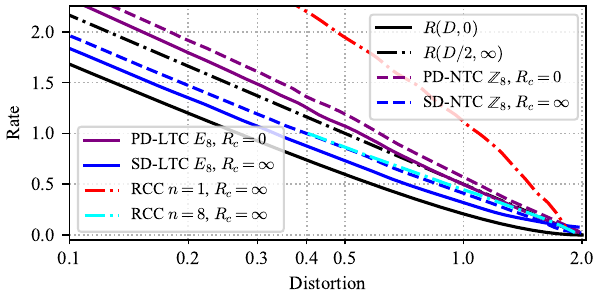}
        \caption{Comparison with reverse channel coding (RCC).}
        \label{fig:gaussian_RCC}
    \end{subfigure}
    \hfill
    \begin{subfigure}[b]{0.49\linewidth}
        \centering
        \includegraphics[width=\linewidth]{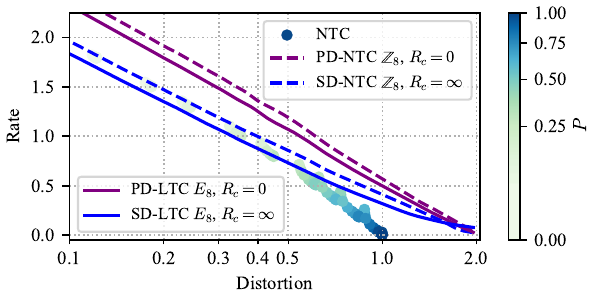}
        \caption{Comparison with deterministic NTC.}
        \label{fig:Gaussian_deterministic}
    \end{subfigure}
    \caption{Gaussian RDP results, comparing PD-LTC and SD-LTC with RCC and deterministic NTC. }
\end{figure}

\begin{figure}
    \centering
    \includegraphics[width=0.9\linewidth]{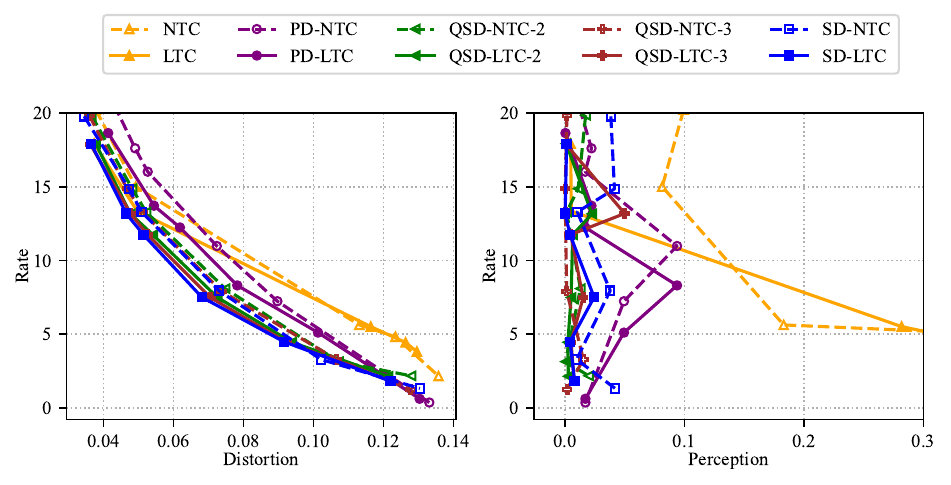}
    \caption{RDP tradeoff of all models on MNIST. QSD-NTC/LTC-$\Gamma$ corresponds to QSD-NTC/LTC with a nesting ratio of $\Gamma$.}
    \label{fig:MNIST_everything}
\end{figure}

\begin{figure}
    \centering
    \includegraphics[width=0.9\linewidth]{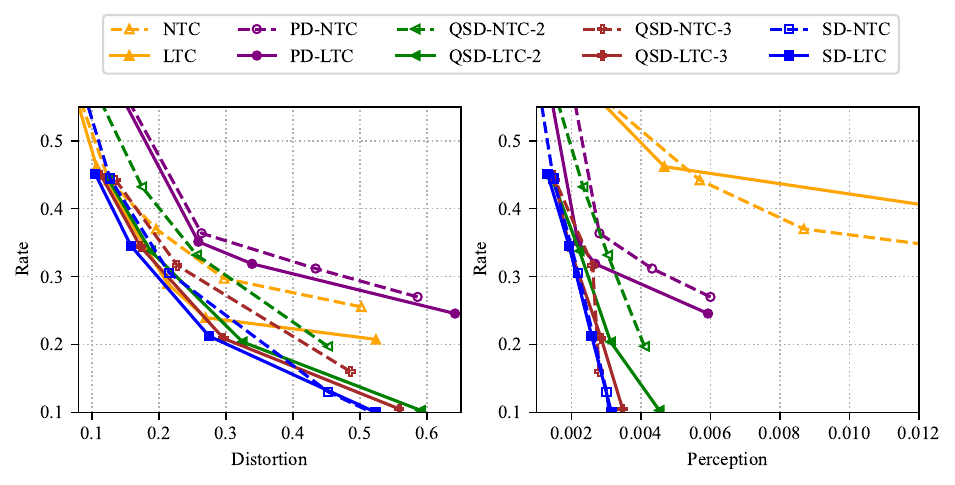}
    \caption{RDP tradeoff of all models on Physics. QSD-NTC/LTC-$\Gamma$ corresponds to QSD-NTC/LTC with a nesting ratio of $\Gamma$.}
    \label{fig:Physics_everything}
\end{figure}

\begin{figure}
    \centering
    \includegraphics[width=0.9\linewidth]{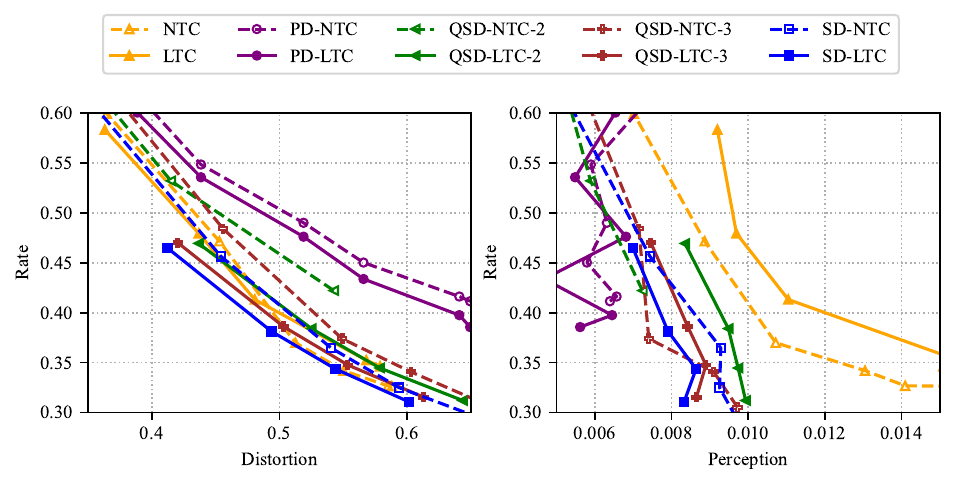}
    \caption{RDP tradeoff of all models on Speech. QSD-NTC/LTC-$\Gamma$ corresponds to QSD-NTC/LTC with a nesting ratio of $\Gamma$.}
    \label{fig:Speech_everything}
\end{figure}

\begin{figure}
    \centering
    \begin{minipage}{0.45\linewidth}
        \centering
        \includegraphics[width=\linewidth]{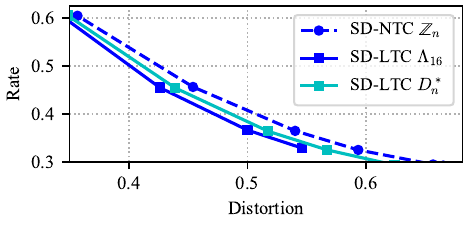}
        \includegraphics[width=\linewidth]{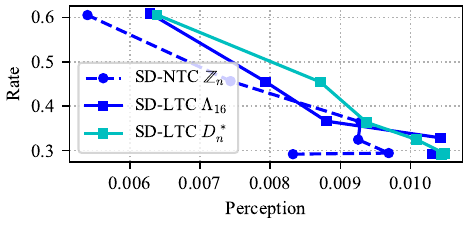}
        \caption{Comparing different lattice choices for SD-LTC, on Speech. }
        \label{fig:Speech_Dn}
    \end{minipage}
    \hfill
    \begin{minipage}{0.45\linewidth}
        \centering
        \includegraphics[width=\linewidth]{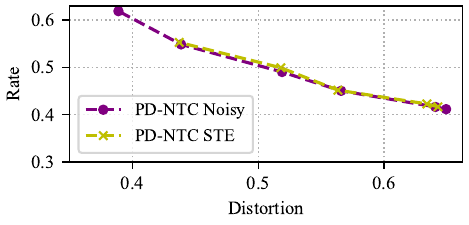}
        \includegraphics[width=\linewidth]{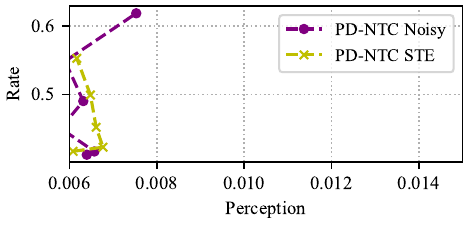}
        \caption{Comparing straight-through estimator (STE) with hard quantization vs. noisy proxy for PD-NTC, on Speech.}
        \label{fig:Speech_STE}
    \end{minipage}
\end{figure}

\newpage
\section{Proofs}
\label{sec:proofs}

\subsection{Proof of Proposition~\ref{prop:PD_MSE}}
\begin{proof}
    By the crypto lemma \citep{zamir2014lattice}, we have that $\hat{\bx}_{\mathsf{SD}} = Q_{\Lambda}(\bx-\bu)+\bu \stackrel{d}{=} \bx + \bu$. Then 
    \begin{equation}
        \frac{1}{n}\EE\bracket*{\|\bx-\hat{\bx}_{\mathsf{SD}}\|^2} = \frac{1}{n}\EE[\|\bu\|^2] = \sigma^2(\Lambda),
    \end{equation}
    the second moment of the lattice. Additionally, 
    \begin{align}
        &\frac{1}{n}\EE\bracket*{\|\bx-\hat{\bx}_{\mathsf{PD}}\|^2} = \frac{1}{n}\EE\bracket*{\|\bx - Q_{\Lambda}(\bx) - s\bu\|^2} \\
        &  = \frac{1}{n}\EE\bracket*{\|\bx - Q_{\Lambda}(\bx)\|^2} + \frac{1}{n}s^2\EE[\|\bu\|^2] \\
        &  = s^2\sigma^2(\Lambda) + \frac{1}{n}\EE\bracket*{\|\bx - Q_{\Lambda}(\bx)\|^2} \\
        &  \geq s^2\sigma^2(\Lambda) \\
        & \geq \sigma^2(\Lambda),
    \end{align}
    as desired.
\end{proof}

\subsection{Proof of Theorem~\ref{thm:SD-LTC}}

\begin{theorem}[Optimality of SD-LTC for Gaussian sources (Thm.~\ref{thm:SD-LTC} in main text)]
    Let $X_1,X_2,\dots \stackrel{\mathrm{i.i.d.}}{\sim} \mathcal{N}(0, \sigma^2)$. For any $P$ and $D$ satisfying $0 \leq P \leq \sigma^2$ and $0 < D \leq 2\sigma^2$, there exists a sequence of SD-LTCs $\{(g_a^{(n)}, g_s^{(n)}, \Lambda^{(n)})\}_{n=1}^{\infty}$ such that the achieved rate, distortion, and perception satisfy
    \begin{equation}
        \lim_{n\rightarrow \infty} \frac{1}{n} H(Q_{\Lambda^{(n)}}(g_a^{(n)}(X^n) - \bu)|\bu) = R(D, P),
    \end{equation}
    \begin{equation}
        \lim_{n\rightarrow \infty} \frac{1}{n} \EE\bracket*{\|X^n - \hat{X}^n\|_2^2} \leq D,
    \end{equation}
    \begin{equation}
        \lim_{n\rightarrow \infty} \frac{1}{n}W_2^2(X^n, \hat{X}^n) \leq P,
    \end{equation}
    where $\hat{X}^n = g_s^{(n)}(Q_{\Lambda^{(n)}}(g_a^{(n)}(X^n) - \bu) + \bu)$, and $\bu \sim \mathrm{Unif}(\mathcal{V}_0(\Lambda^{(n)}))$.
\end{theorem}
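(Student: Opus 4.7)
My plan is to use linear transforms $g_a^{(n)}(x)=\alpha x$ and $g_s^{(n)}(z)=\beta z$ for scalars $\alpha,\beta$ to be chosen, together with a sequence of sphere-bound-achieving lattices $\Lambda^{(n)}$ with per-dimension second moment $\sigma_U^2$, so that $G(\Lambda^{(n)})\to 1/(2\pi e)$. The crypto lemma recalled above \eqref{eq:rate_shared_dither2} then yields the distributional identity $\hat X^n \stackrel{d}{=} \alpha\beta X^n + \beta \bu$, where $\bu$ is uniform on $\mathcal V_0(\Lambda^{(n)})$ and independent of $X^n$. The whole proof reduces to analyzing this single additive-uniform channel.

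\textbf{Matching parameters.} By Remark~\ref{remark:jointGaussian} the RDP-optimal pair is jointly Gaussian; in the active-perception regime $\sqrt{P}<\sigma-\sqrt{|\sigma^2-D|}$, write $\hat X^* = \gamma X + \eta W$ with $W\perp X$ standard normal, $\gamma = \theta/\sigma^2$ and $\eta^2 = (\sigma-\sqrt P)^2-\theta^2/\sigma^2$ (the remaining regime collapses to classical rate-distortion and is strictly easier). I would pick $\alpha,\beta,\sigma_U^2$ so that $\alpha\beta=\gamma$ and $\beta^2\sigma_U^2=\eta^2$. Distortion is then immediate: $\frac{1}{n}\EE\|X^n-\hat X^n\|^2 = (1-\gamma)^2\sigma^2+\eta^2 = \sigma^2+(\sigma-\sqrt P)^2-2\theta = D$, using the defining relation $2\theta=\sigma^2+(\sigma-\sqrt P)^2-D$. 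The per-letter marginal variance of $\hat X^n$ is $\gamma^2\sigma^2+\eta^2=(\sigma-\sqrt P)^2$, which is the RDP-optimal marginal.

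\textbf{Rate.} Using \eqref{eq:rate_shared_dither2}, the per-letter operational rate equals $\frac{1}{n}I(\alpha X^n;\alpha X^n+\bu)$. I would invoke the Zamir--Feder ECDQ rate bound \citep{zamir2014lattice}: $I(\by;\by+\bu) \leq I(\by;\by+\mathbf{z}^*) + \frac{n}{2}\log(2\pi e\, G(\Lambda))$, where $\mathbf{z}^*\sim\mathcal N(0,\sigma_U^2 I_n)$; for sphere-bound-achieving $\Lambda^{(n)}$ the overhead vanishes. The leading term is $\frac{n}{2}\log(1+\alpha^2\sigma^2/\sigma_U^2)$, and an algebraic check using the parameter relations above (in particular $\alpha\sigma_U/\eta$ simplifies to $(\sigma-\sqrt P)/\eta$ after eliminating $\beta$) reduces this to the closed form for $R(D,P)$ in Proposition~\ref{prop:RDPF_Gaussian}.

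\textbf{Perception (main obstacle).} The squared $2$-Wasserstein is the delicate step. By the triangle inequality, $W_2(P_{X^n},P_{\hat X^n}) \leq W_2(P_{X^n},P^*) + W_2(P^*,P_{\hat X^n})$ with $P^*=\mathcal N(0,(\sigma-\sqrt P)^2 I_n)$. The first term has square $nP$ by the standard formula between isotropic centered Gaussians of variances $\sigma^2$ and $(\sigma-\sqrt P)^2$. For the second, since $\hat X^n=\alpha\beta X^n+\beta\bu$ has an exact Gaussian part, the gap reduces to $W_2^2(\beta\bu,\beta\mathbf{z}^*)$. For sequences of lattices whose Voronoi-uniform distribution converges to iid Gaussian in relative entropy, one has $D(P_{\bu}\|\mathcal N(0,\sigma_U^2 I_n))=o(n)$, and Talagrand's $T_2$ transport inequality then gives $\frac{1}{n}W_2^2(\beta\bu,\beta\mathbf{z}^*)\to 0$. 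Combining yields $\frac{1}{n}W_2^2(P_{X^n},P_{\hat X^n})\to P$, hence $\leq P$ in the limit. The bulk of the technical work lies in (i) exhibiting a single sequence of lattices that is simultaneously sphere-bound-achieving \emph{and} divergence-close to Gaussian, so that the rate and perception limits are attained on the same subsequence, and (ii) making the relative-entropy-to-Wasserstein transfer quantitative in $n$.
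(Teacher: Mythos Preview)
Your approach is correct and essentially identical to the paper's: linear (scalar) transforms, a sphere-bound-achieving lattice sequence, the crypto lemma for distortion, the ECDQ/AWGN-channel equivalence for rate, and the triangle inequality plus Talagrand's $T_2$ inequality for perception. Your stated ``main obstacle'' (i) in fact dissolves via the exact identity $\frac{1}{n}\DKL\!\big(\bu\,\big\|\,\mathcal{N}(0,\sigma^2(\Lambda)I_n)\big)=\frac{1}{2}\log\big(2\pi e\,G(\Lambda)\big)$, which is precisely what the paper invokes as \citet[Thm.~7.3.3]{zamir2014lattice}; thus any sphere-bound-achieving sequence is automatically divergence-close to Gaussian, and (ii) is then immediate from Talagrand.
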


\begin{proof}

    We first focus on the case when $\sqrt{P} < \sigma - \sqrt{|\sigma^2-D|}$. Define the sequence of DLTCs $\{(g_a^{(n)}, g_s^{(n)}, \Lambda^{(n)})\}_{n=1}^{\infty}$ as follows. Choose $\Lambda^{(n)}$ to be a sequence of sphere-bound-achieving lattices, i.e., $\lim_{n\rightarrow \infty} G(\Lambda^{(n)}) = \frac{1}{2 \pi e}$, where $G(\cdot)$ is the normalized second moment, such that the second moment $\sigma^2(\Lambda^{(n)}) = \eta^2 := \sigma^2 \bracket*{\frac{\sigma^2(\sigma - \sqrt{P})^2}{\frac{1}{4}\paran*{\sigma^2+(\sigma-\sqrt{P})^2-D}^2}-1}$. Set $g_a^{(n)}(\bv)=\bv$ to be identity mapping, and set $g_s^{(n)}(\bv) = \frac{\sigma - \sqrt{P}}{\sqrt{\sigma^2 + \eta^2}} \bv$. 
    
    We first verify the perception constraint is satisfied. Fix $\epsilon > 0$. From \citet[Thm.~7.3.3]{zamir2014lattice}, we have that 
    \begin{equation}
        \frac{1}{n} \DKL(\bu^{(n)} || \bz) < \frac{\epsilon^2}{2\sigma^2} \label{eq:dither_gaussian}
    \end{equation}
    for $n$ sufficiently large, where $\bu^{(n)} \sim \mathrm{Unif}(\mathcal{V}_0(\Lambda^{(n)}))$ is uniform over the fundamental cell of $\Lambda^{(n)}$, and $\bz \sim \mathcal{N}(0, \eta^2 I_n)$. Let $P_{\tilde{Y}^n} = \mathcal{N}\paran*{0, (\sigma - \sqrt{P})^2I_n}$. Then, for $n$ sufficiently large,   
    \begin{align}
        \frac{1}{n} \DKL(P_{\hat{X}^n} || P_{\tilde{Y}^n}) &= \frac{1}{n} \DKL\paran*{\frac{\sigma-\sqrt{P}}{\sqrt{\sigma^2+\eta^2}}(X^n+\bu^{(n)}) || \frac{\sigma-\sqrt{P}}{\sqrt{\sigma^2+\eta^2}}(X^n + \bz)} \label{eq:Z1} \\
        &= \frac{1}{n} \DKL(X^n+\bu^{(n)} || X^n + \bz) \label{eq:Z2} \\
        &\leq \frac{1}{n} \DKL(\bu^{(n)} || \bz) \label{eq:Z3} \\
        &\leq \frac{\epsilon^2}{2\sigma^2}, 
    \end{align}
    where \eqref{eq:Z1} holds by the crypto lemma \citep[Ch.~4.1]{zamir2014lattice}, \eqref{eq:Z2} holds since KL-divergence is invariant to affine transformations, and \eqref{eq:Z3} is by data-processing inequality. Thus, for $n$ sufficiently large,
    \begin{align}
        \frac{1}{\sqrt{n}} W_2\paran*{P_{X^n}, P_{\hat{X}^n}} &\leq \frac{1}{\sqrt{n}} W_2\paran*{P_{X^n},  P_{\tilde{Y}^n}} + \frac{1}{\sqrt{n}} W_2\paran*{P_{\hat{X}^n}, P_{\tilde{Y}^n}} \label{eq:W1} \\
        &\leq \frac{1}{\sqrt{n}} W_2\paran*{P_{X^n}, P_{\tilde{Y}^n}} + \frac{1}{\sqrt{n}} \cdot \sqrt{2\sigma^2 \cdot \DKL\paran*{P_{\hat{X}^n} || P_{\tilde{Y}^n}} }\label{eq:W_KL_ineq} \\
        &\leq \frac{1}{\sqrt{n}} W_2\paran*{P_{X^n}, P_{\tilde{Y}^n}} + \epsilon \\
        &\leq \frac{1}{\sqrt{n}} \sqrt{\sum_{i=1}^n W_2^2\paran*{P_{X_i}, \mathcal{N}\paran*{0, (\sigma - \sqrt{P})^2}}} + \epsilon \label{eq:W_product} \\
        &= W_2\paran*{\mathcal{N}(0, \sigma^2), \mathcal{N}\paran*{0, (\sigma - \sqrt{P})^2}} + \epsilon \\
        &= \sqrt{P} + \epsilon,
    \end{align}
    where \eqref{eq:W1} holds since Wasserstein distance satisfies triangle inequality, \eqref{eq:W_KL_ineq} is by \citet{talagrand1996}, and \eqref{eq:W_product} holds by properties of 2-Wasserstein distance on product measures \citep{panaretos2019statistical}. By continuity of $z \mapsto z^2$, we have $\lim_{n\rightarrow \infty} \frac{1}{n}W_2^2(X^n, \hat{X}^n) \leq P$.
 
    The rate achieved will satisfy 
    \begin{align}
        \lim_{n\rightarrow \infty} \frac{1}{n} H(Q_{\Lambda^{(n)}}(X^n - \bu^{(n)})|\bu^{(n)})
        &= \lim_{n\rightarrow \infty} \frac{1}{n} I(X^n; X^n + \bu^{(n)}) \label{eq:R1} \\
        &= I(X; X+Z) \label{eq:R2} \\
        &= \frac{1}{2}\log\paran*{1 + \frac{\sigma^2}{\eta^2}} \\
        &= \frac{1}{2} \log \paran*{1 + \frac{1}{\frac{\sigma^2(\sigma - \sqrt{P})^2}{\frac{1}{4}\paran*{\sigma^2+(\sigma-\sqrt{P})^2-D}^2}-1}} \\
        &= \frac{1}{2} \log \frac{\sigma^2(\sigma-\sqrt{P})^2}{\sigma^2(\sigma-\sqrt{P})^2 - \frac{1}{4}\paran*{\sigma^2+(\sigma-\sqrt{P})^2-D}^2},
    \end{align}
    where $\bu^{(n)}$ is uniform over $\Lambda^{(n)}$, and $Z \sim \mathcal{N}(0, \eta^2)$. Here, \eqref{eq:R1} holds by \citet[Thm.~5.2.1]{zamir2014lattice}, and \eqref{eq:R2} is due to \citet[Thm.~3]{zamir1996}.

    The distortion satisfies
    \begin{align}
        \lim_{n\rightarrow \infty} \frac{1}{n} \EE\bracket*{\norm*{X^n - \hat{X}^n}^2} &= \lim_{n\rightarrow \infty} \frac{1}{n} \EE\bracket*{\norm*{X^n - \frac{\sigma - \sqrt{P}}{\sqrt{\sigma^2 + \eta^2}}(X^n + \bu^{(n)})}_2^2} \label{eq:D1} \\
        &= \lim_{n\rightarrow \infty} \frac{1}{n} \EE\bracket*{\norm*{X^n - \frac{\sigma - \sqrt{P}}{\sqrt{\sigma^2 + \eta^2}}(X^n + Z^n)}_2^2}, \label{eq:D2}
    \end{align}
    where $Z^n \sim \mathcal{N}(0, \eta^2 I)$. Here, \eqref{eq:D1} holds by crypto lemma, and \eqref{eq:D2} holds since $X^n$ and $\bu^{(n)}$ are independent, so the squared norm becomes a sum of second moments of $X^n$ and $\bu^{(n)}$, and $\EE[\|\bu^{(n)}\|^2]=\EE[\|Z^n\|^2]$. Since $X^n$ and $Z^n$ are now i.i.d., 
    \begin{align}
        \lim_{n\rightarrow \infty} \frac{1}{n} \EE\bracket*{\norm*{X^n - \hat{X}^n}^2} &= \EE\bracket*{\paran*{X - \frac{\sigma - \sqrt{P}}{\sqrt{\sigma^2 + \eta^2}}(X + Z)}^2} \\
        &= \paran*{1 - \frac{\sigma - \sqrt{P}}{\sqrt{\sigma^2 + \eta^2}}}^2\sigma^2 + \paran*{\frac{(\sigma - \sqrt{P})^2}{\sigma^2 + \eta^2}} \eta^2 \\
        &= \sigma^2 + (\sigma - \sqrt{P})^2 - 2\sigma^2 \frac{\sigma - \sqrt{P}}{\sqrt{\sigma^2 + \eta^2}} \\
        &= \sigma^2 + (\sigma - \sqrt{P})^2 - (\sigma^2 + (\sigma-\sqrt{P})^2-D) \\
        &= D.
    \end{align}

    For the case when $\sqrt{P} \geq \sigma - \sqrt{|\sigma^2-D|}$, we use a sequence of DLTCs with $g_a^{(n)}(\bv)=\bv$, $g_s^{(n)}(\bv)=\paran*{\frac{\sigma^2-D}{\sigma^2}}\bv$, and a sequence of sphere-bound-achieving lattices $\Lambda^{(n)}$ with second moment $\sigma^2(\Lambda)=\frac{1}{\nicefrac{1}{D}-\nicefrac{1}{\sigma^2}}$. For the perception constraint, by following the proof of the perception constraint in the previous case, except with $P_{\tilde{Y}^n} = \mathcal{N}(0, (\sigma^2-D)I_n)$ and $\bz \sim \mathcal{N}\paran*{0, \frac{1}{\paran*{\nicefrac{1}{D}-\nicefrac{1}{\sigma^2}}}I_n}$, we have that 
    \begin{align}
        \frac{1}{\sqrt{n}}W_2\paran*{P_{X^n}, P_{\hat{X}^n}} &\leq W_2\paran*{\mathcal{N}(0, \sigma^2), \mathcal{N}\paran*{0, \sigma^2-D}} + \epsilon \\
        &= \sigma - \sqrt{|\sigma^2-D|} + \epsilon \\
        &\leq \sqrt{P} + \epsilon,
    \end{align} 
    for any $\epsilon > 0$ and $n$ sufficiently large, where the last step is by the assumption that $\sqrt{P} \geq \sigma - \sqrt{|\sigma^2-D|}$. The result follows again by continuity of $z \mapsto z^2$.
    For the rate and distortion constraints of 
    \begin{equation}
        \lim_{n\rightarrow \infty} \frac{1}{n} H(Q_{\Lambda^{(n)}}(X^n - \bu^{(n)})|\bu^{(n)}) = \max\cbracket*{\frac{1}{2}\log \frac{\sigma^2}{D}, 0},
    \end{equation}
    and 
    \begin{equation}
        \lim_{n\rightarrow \infty} \frac{1}{n} \EE\bracket*{\norm*{X^n - \hat{X}^n}^2} = D,
    \end{equation}
    the proof follows that of \citet[Thm.~5.6.1]{zamir2014lattice}.

\end{proof}

\subsection{Proof of Theorem~\ref{thm:PD-LTC}}
\label{sec:PD-LTC-proof}
We now consider the case when the encoder and decoder do not have access to infinite shared randomness. Unlike Thm.~\ref{thm:SD-LTC}, Thm.~\ref{thm:PD-LTC} cannot make use of the additive channel equivalence, and we instead rely on results from lattice Gaussian coding \citep{ling2014achieving}. We first introduce several important concepts of lattice Gaussians, then establish several lemmas that will be used to prove Thm.~\ref{thm:PD-LTC}.

\begin{definition}[Lattice Gaussian Distribution]
    \label{def:lattice_gaussian}
    A lattice Gaussian random variable $\by \sim \LG_{\Lambda}(\bc, \sigma^2)$ supported on a (shifted) lattice $\Lambda + \bc \subseteq \mathbb{R}^n$ has PMF
    \begin{equation}
        q_{\by}(\blambda) = \frac{\rho_{\sigma}(\blambda)}{\rho_{\sigma}(\Lambda + \bc)}, \quad \blambda \in \Lambda + \bc,
    \end{equation}
    where $\rho_{\sigma}(\by) = e^{-\frac{1}{2\sigma^2}\|\by\|^2}$ and $\rho_{\sigma}(\Lambda) = \sum_{\blambda \in \Lambda}\rho_{\sigma}(\blambda)$.
\end{definition}
For a more in-depth introduction, see \citet{NSDthesis}. The proof of Thm.~\ref{thm:PD-LTC} relies on known results regarding the lattice Gaussian second moment and entropy; see \citet{ling2014achieving, regev2009lattices, Banaszczyk1993} for details. 

Two tools which will be used throughout the proof are the (i) flatness factor of a lattice, and (ii) a vanishing error probability of maximum a posteriori (MAP) decoding of a lattice Gaussian signal sent over an AWGN channel. For (i), the flatness factor \citep{ling2014semantically} of a lattice $\Lambda$ is defined as
\begin{equation}
    \epsilon_{\Lambda}(\gamma) := \max_{\bx \in \mathcal{V}_0(\Lambda)} |V(\Lambda) \rho_{\gamma, \Lambda}(\bx)-1|,
\end{equation}
where $\rho_{\gamma, \Lambda}(\bx) = \frac{1}{(2\pi \gamma^2)^{n/2}}\sum_{\blambda \in \Lambda}e^{-\frac{\|\bx-\blambda\|^2}{2\gamma^2}}$. For (ii), suppose we wish to send $\by \sim \LG_{\Lambda}(\bc, \sigma^2-\nu)$ over an AWGN channel with noise $\bz \sim \mathcal{N}(0, \nu I_n)$. \citet{ling2014achieving} show that given the received signal $\tilde{\bx} = \by + \bz$, the MAP decoder of $\by$ given $\tilde{\bx}$ is given by $Q_{\Lambda-\bc}(\alpha \tilde{\bx})$, where $\alpha = \frac{\sigma^2-\nu}{\sigma^2}$. 

The high-level approach in the proof of Thm.~\ref{thm:PD-LTC} is to analyze the rate, distortion, and perception when compressing $\tilde{\bx} = \by + \bz$, following \citet{liu2021polar, liu2024quantization}. This is because $P_{\tilde{\bx}}$ approximates $\mathcal{N}(0, \sigma^2 I_n)$, with the approximation error given by the flatness factor $\epsilon_{\Lambda}\paran*{\sqrt{\frac{(\sigma^2-\nu)\nu}{\sigma^2}}}$ \citep{ling2014achieving, regev2009lattices}. With a vanishing MAP error probability, $Q_{\Lambda}(\alpha \tilde{\bx}) \approx \by$, and the overall system can be statistically analyzed using the properties of $\by$, which is a lattice Gaussian. We refer the reader to \citet{ling2014semantically, ling2014achieving, liu2021polar} for further details of lattice Gaussian coding.

The next result describes the scaling of the lattice covering radius $r_{\mathrm{cov}}(\Lambda) := \min\{r : \Lambda + \mathcal{B}(\bzero, r) \text{ is a covering of $\mathbb{R}^n$}\}$, where $\Lambda + \mathcal{B}(\bzero, r)$ is the set composed of spheres of radius $r$ centered at all lattice vectors of $\Lambda$ \citep{zamir2014lattice}. This allows one to bound the $\ell^2$ error between a vector and its lattice-quantized version by $O(n^{\nicefrac{1}{2}})$.

\begin{lemma}
    \label{lemma:covering_radius}
    Let $\Lambda$ be a $n$-dimensional lattice with volume $C_1^{n/2}$. Then its covering radius satisfies
    \begin{equation}
        r_{\mathrm{cov}}(\Lambda) \leq C_2 \sqrt{\frac{\pi e}{2}} \paran*{\frac{C_1}{\pi}}^{1/2} \bracket*{\paran*{\frac{n}{2}}!}^{1/n} = O(n^{\nicefrac{1}{2}}),
    \end{equation}
    for a positive constant $C_2$.
\end{lemma}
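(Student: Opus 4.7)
The proof plan is to bound the covering radius via the lattice's \emph{covering efficiency} (or thickness), defined as $\Theta(\Lambda) := V_n \cdot r_{\mathrm{cov}}(\Lambda)^n / V(\Lambda)$, where $V_n = \pi^{n/2}/(n/2)!$ is the volume of the unit Euclidean ball in $\mathbb{R}^n$. For a sequence of ``good'' covering lattices---those guaranteed by Rogers' theorem, or for that matter a suitably scaled integer lattice---one has a uniform bound of the form $\Theta(\Lambda^{(n)})^{1/n} \leq C_2 \sqrt{\pi e / 2}$ for some constant $C_2$ independent of $n$. I would absorb this into the constant $C_2$ appearing in the statement.

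The main manipulation is a straightforward volume computation. Substituting $V(\Lambda) = C_1^{n/2}$ and $V_n = \pi^{n/2}/(n/2)!$ into the definition of $\Theta$ and solving for $r_{\mathrm{cov}}$ gives
\[
r_{\mathrm{cov}}(\Lambda)^n \;=\; \Theta(\Lambda) \cdot \frac{V(\Lambda)}{V_n} \;=\; \Theta(\Lambda) \cdot \frac{C_1^{n/2}\,(n/2)!}{\pi^{n/2}},
\]
and taking the $n$-th root yields
\[
r_{\mathrm{cov}}(\Lambda) \;=\; \Theta(\Lambda)^{1/n}\paran*{\frac{C_1}{\pi}}^{1/2}\bracket*{\paran*{\frac{n}{2}}!}^{1/n} \;\leq\; C_2\sqrt{\frac{\pi e}{2}}\paran*{\frac{C_1}{\pi}}^{1/2}\bracket*{\paran*{\frac{n}{2}}!}^{1/n},
\]
which is exactly the inequality claimed in the statement.

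To obtain the asymptotic rate $O(n^{1/2})$, I would invoke Stirling's formula, which gives $(n/2)! \sim \sqrt{\pi n}\,(n/(2e))^{n/2}$, so $[(n/2)!]^{1/n} \sim \sqrt{n/(2e)}$ since the extra factor $(\pi n)^{1/(2n)}$ tends to $1$. As $\sqrt{\pi e/2}\,(C_1/\pi)^{1/2}$ is a dimension-independent constant, the right-hand side scales like $\Theta(\sqrt{n})$, establishing the $O(n^{1/2})$ claim.

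The main obstacle is justifying the uniform bound $\Theta(\Lambda^{(n)})^{1/n} = O(1)$ for the particular sequence of lattices being used in the surrounding proof of Thm.~\ref{thm:PD-LTC} (the sphere-bound-achieving ones). Fortunately, any lattice sequence whose covering efficiency grows no faster than exponentially in $n$---in particular polynomially, as delivered by Rogers' construction---suffices, since $\Theta^{1/n}$ then stays bounded by any constant strictly greater than $1$ for all sufficiently large $n$. The only care required is to ensure that this covering-good sequence can be chosen to be simultaneously compatible with the other structural requirements (second moment, lattice-Gaussian mixing, etc.) imposed by Thm.~\ref{thm:PD-LTC}; this compatibility is standard in the lattice coding literature.
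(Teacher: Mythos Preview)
Your proposal is correct and is essentially the same argument as the paper's. The paper writes $r_{\mathrm{cov}}(\Lambda) = \rho_{\mathrm{cov}}(\Lambda)\cdot r_{\mathrm{eff}}(\Lambda)$ with $r_{\mathrm{eff}}(\Lambda) = (V(\Lambda)/V_n)^{1/n}$ and bounds $\rho_{\mathrm{cov}}(\Lambda)\leq C\sqrt{\pi e/2}$; since your thickness satisfies $\Theta(\Lambda)^{1/n}=\rho_{\mathrm{cov}}(\Lambda)$, the two computations are identical, with your Stirling step making explicit the final $O(n^{1/2})$ that the paper simply asserts.
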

\begin{proof}
    Using results in \citet[Ch.~3]{zamir2014lattice},
    \begin{align}
         r_{\mathrm{cov}}(\Lambda) &= \rho_{\mathrm{cov}}(\Lambda) \cdot r_{\mathrm{eff}}(\Lambda)  \label{eq:cov_1} \\
         &\leq C\sqrt{\frac{\pi e}{2}}\bracket*{\frac{V(\Lambda)}{V_n}}^{1/n} \\
         &= C\sqrt{\frac{\pi e}{2}} \bracket*{\frac{C_1^{n/2}}{\frac{\pi^{n/2}}{(n/2)!}}}^{1/n} \\
         &= C\sqrt{\frac{\pi e}{2}}\paran*{\frac{C_1}{\pi}}^{1/2} \bracket*{(n/2)!}^{1/n} \\
         &= O(n^{\nicefrac{1}{2}}),
     \end{align}
     where $C$ is a constant, $\rho_{\mathrm{cov}}$ is the covering efficiency, $r_{\mathrm{eff}}$ is the effective lattice radius, and $V_n$ is the volume of a $n$-dimensional unit ball, following \citet[Ch.~3]{zamir2014lattice}.
\end{proof}

The next lemma bounds the second moment of $Q(\alpha \tilde{\bx})$, which is used to bound the error terms.
\begin{lemma}
    \label{lemma:Q_second_moment_bound}
    Let $\tilde{\bx} = \by + \bz$, $\by \sim \LG_{\Lambda}(\bc, \sigma^2-\nu)$ and $\bz \sim \mathcal{N}(0, \nu I_n)$. Let $\alpha = \frac{\sigma^2-\nu}{\sigma^2}$. If $\{\Lambda^{(n)}\}_{n=1}^\infty$ is a sequence of lattices that is AWGN-good with vanishing error probability of MAP decoding of $\by$ given $\tilde{\bx}$, then for any $\epsilon > 0$,
    \begin{equation}
        \EE[\|Q(\alpha \tilde{\bx})\|^2] \leq \EE[\|\by\|^2] + \epsilon,
    \end{equation}
    for sufficiently large $n$.
\end{lemma}
\begin{proof}
    The second moment of $Q(\alpha \tilde{\bx})$ satisfies
    \begin{align}
        \EE\bracket*{\|Q(\alpha \tilde{\bx})\|^2} &= \EE\bracket*{\|Q(\alpha \tilde{\bx})\|^2 \mathbbm{1}\{\mathcal{E}^\complement\}} + \EE\bracket*{\|Q(\alpha \tilde{\bx})\|^2 \mathbbm{1}\{\mathcal{E}\}} \\
        &= \EE\bracket*{\|\by\|^2 \mathbbm{1}\{\mathcal{E}^\complement\}} + \EE\bracket*{\|Q(\alpha \tilde{\bx})\|^2 \mathbbm{1}\{\mathcal{E}\}} \\
        &\leq \EE\bracket*{\|\by\|^2} \|\mathbbm{1}\{\mathcal{E}^\complement\}\|_{\infty} + \EE\bracket*{\|Q(\alpha \tilde{\bx})\|^2 \mathbbm{1}\{\mathcal{E}\}} \label{eq:Q_holder} \\
        &= \EE\bracket*{\|\by\|^2} + \EE\bracket*{\|Q(\alpha \tilde{\bx})\|^2 \mathbbm{1}\{\mathcal{E}\}} \\
        &\leq \EE\bracket*{\|\by\|^2} + \sqrt{\EE\bracket*{\|Q(\alpha \tilde{\bx})\|^4} \PP(\mathcal{E})}, \label{eq:2nd_moment_bound}
    \end{align}
    where  in \eqref{eq:Q_holder} we use Hölder's inequality, and the last step is by Cauchy-Schwarz. The second term involving the error event vanishes as follows. We have that 
    \begin{align}
        \|Q(\alpha \tilde{\bx})\|^4 &\leq 8 \|Q(\alpha \tilde{\bx})-\alpha \tilde{\bx}\|^4 + 8 \|\alpha \tilde{\bx}\|^4 \label{eq:CS4} \\
        &\leq 8\|\by - \alpha \tilde{\bx}\|^4 + 8\|\alpha \tilde{\bx}\|^4 \label{eq:y_hat_y} \\
        &= 8\|(1-\alpha)\by - \alpha \bz\|^4 + 8 \|\alpha \by + \alpha \bz\|^4,
    \end{align}
    where \eqref{eq:CS4} is by triangle inequality and Cauchy-Schwarz, and \eqref{eq:y_hat_y} holds since $Q$ finds the closest vector to $\alpha \tilde{\bx}$.
    We have that the expectations satisfy
    \begin{align}
        &\EE\bracket*{\|(1-\alpha)\by - \alpha \bz\|^4} \\
        &\leq (1-\alpha)^4 \EE[\|\by\|^4] + \alpha^4 \EE[\|\bz\|^4] + \alpha^2 (1-\alpha)^2 \EE[\langle \by, \bz \rangle^2] + 2(1-\alpha)^2 \alpha^2 \EE[\|\by\|^2 \|\bz\|^2] \\
        &\leq 3(1-\alpha)^4 (\sigma^2-\nu)^2 n + \alpha^4 \nu^2 n(n+2) + 3 \alpha^2 (1-\alpha)^2 \EE[\|\by\|^2] \EE[\|\bz\|^2] \\
        &\leq 3(1-\alpha)^4 (\sigma^2-\nu)^2 n + \alpha^4 \nu^2 n(n+2) + 3 \alpha^2 (1-\alpha)^2 (\sigma^2-\nu)\nu n^2 \\
        & = O(n^2),
    \end{align}
    and
    \begin{align}
        \EE\bracket*{\|\alpha \by + \alpha \bz\|^4} &= \alpha^4 \EE\bracket*{\|\by\|^4} + \alpha^4 \EE\bracket*{\|\bz\|^4} + \alpha^4 \EE[\langle \by, \bz \rangle^2] + 2\alpha^4 \EE\bracket*{\|\by\|^2 \|\bz\|^2} \\
        &\leq 3\alpha^4 (\sigma^2-\nu)^2 n + \alpha^4 \nu^2 n(n+2) + 3\alpha^4 \EE\bracket*{\|\by\|^2 \|\bz\|^2} \\
        &= 3\alpha^4 (\sigma^2-\nu)^2 n + \alpha^4 \nu^2 n(n+2) + 3\alpha^4 (\sigma^2-\nu)\nu n^2 \\
        &= O(n^2),
    \end{align}
    for $n$ sufficiently large.
    This implies that 
    \begin{equation}
        \EE\bracket*{\|Q(\alpha \tilde{\bx})\|^4} \leq 8\EE\bracket*{\|(1-\alpha)\by - \alpha \bz\|^4} + 8\EE\bracket*{\|\alpha \by + \alpha \bz\|^4} = O(n^2),
    \end{equation}
    and therefore 
    \begin{equation}
        \frac{1}{n} \sqrt{\EE\bracket*{\|Q(\alpha \tilde{\bx})\|^4} \PP(\mathcal{E})} < \epsilon
    \end{equation}
    for $n$ sufficiently large since $\lim_{n \rightarrow \infty}\PP(\mathcal{E}) = 0$.
\end{proof}

Before proving Thm.~\ref{thm:PD-LTC}, we first prove a lemma which describes an achievable RDP region of pre/post-scaled lattice quantization with private dithering on the Gaussian source that makes the relationship between the flatness factors and the scaling $s$ explicit. Ensuring that all flatness factors vanish would imply a single-letter characterization of the achievable RDP region for the near-perfect perception regime, which is what Thm.~\ref{thm:PD-LTC} provides.
\begin{lemma}
    \label{lemma:PD-LTC-region}
    Let $X_1,X_2,\dots \stackrel{\mathrm{i.i.d.}}{\sim} \mathcal{N}(0, \sigma^2)$. Let $\nu \in (0, \sigma^2)$, $\alpha = \frac{\sigma^2-\nu}{\sigma^2}$, and $s > 0$, $\beta > 0$ be constants satisfying 
    \begin{equation}
        (\sigma^2-\nu) + s^2 \frac{(\sigma^2-\nu)\nu}{\sigma^2} = \frac{\sigma^2}{\beta^2}. 
        \label{eq:sbeta_condition_P}
    \end{equation}
    Let $\tilde{\bx} = \by + \bz$, $\by \sim \LG_{\Lambda}(\bc, \sigma^2-\nu)$ and $\bz \sim \mathcal{N}(0, \nu I_n)$. If $\{\Lambda^{(n)}\}_{n=1}^\infty$ is a sequence of lattices that is AWGN-good with vanishing error probability of MAP decoding of $\by$ given $\tilde{\bx}$, and is also quantization-good, then for any $\epsilon > 0$,
    \begin{equation}
            \frac{1}{n} H(Q_{\Lambda^{(n)}}(\alpha X^n)) \leq \frac{1}{1-4\epsilon_{z}}\left(\frac{1}{2}\log \frac{\sigma^2}{\nu} + \epsilon_h + \epsilon \right),
    \end{equation}
    \begin{equation}
        \frac{1}{n} \mathbb{E}[\|X^n - \hat{X}^n\|^2] \leq \frac{1}{1-4\epsilon_{z}}\paran*{(1-\beta)^2(\sigma^2-\nu)+\nu+s^2 \beta^2 \frac{(\sigma^2-\nu)\nu}{\sigma^2}+\epsilon},
    \end{equation}
    \begin{equation}
        \frac{1}{\sqrt{n}}W_2(P_{X^n}, P_{\hat{X}^n}) \leq \sqrt{8 \sigma^2 \epsilon_u} + \beta \sqrt{8\epsilon_z \paran*{2\frac{\sigma^2-\nu}{1-4\epsilon_z} + 2 s^2 \frac{(\sigma^2-\nu)\nu}{\sigma^2}}} + \epsilon,
    \end{equation}

    for $n$ sufficiently large, where $\hat{X}^n = \beta(Q_{\Lambda^{(n)}}(\alpha X^n) + s\bu)$, $\bu \sim \mathrm{Unif}(\mathcal{V}(\Lambda^{(n)}))$. In the above,  
  $\epsilon_h=-\frac{\log(1-\epsilon_1)}{n} + \frac{\pi (t^{-4}+1)\epsilon_1}{n(1-\epsilon_1)}$, and

    \begin{equation}
        \epsilon_1 := \epsilon_{\Lambda^{(n)}}\paran*{\frac{\sqrt{\sigma^2-\nu}}{\sqrt{\frac{\pi}{\pi-t}}}}, \quad \epsilon_{z} := \epsilon_{\Lambda^{(n)}}\left ({\sqrt{\frac{(\sigma^2-\nu)\nu}{\sigma^2}}}\right ), \quad \epsilon_u := \epsilon_{\Lambda^{(n)}}\paran*{\sqrt{\frac{(\sigma^2-\nu)\nu s^2}{\sigma^2 + s^2 \nu}}}
    \end{equation}
    
    are flatness factors of lattices $\Lambda^{(n)}$, with $0 < t < 1/e$.
\end{lemma}

\begin{proof}
    Since the lattice sequence is AWGN-good, choose the fundamental volume to satisfy $V(\Lambda^{(n)}) = \paran*{2\pi e \frac{(\sigma^2-\nu)\nu}{\sigma^2} (1+\epsilon_2)}^{\frac{n}{2}}$, where $\epsilon_2 \stackrel{n\rightarrow \infty}{\longrightarrow} 0$, following that of \citet{ling2014achieving}. In the following, we denote the quantizers $Q_{\Lambda^{(n)}}$ as $Q$ with the dependence on the lattice implicit. 
    
     Define $\mathcal{E} := \{Q(\alpha \tilde{\bx}) \neq \by\}$ as the ``error`` event of MAP decoding of $\by$ given $\tilde{\bx}$ \citep[Lemma~11]{ling2014achieving}. By assumption, $\lim_{n\rightarrow \infty}\PP(\mathcal{E}) = 0$. Finally, since the lattices are sphere-bound-achieving, we have that the lattice NSM satisfies $\lim_{n\rightarrow \infty} G(\Lambda^{(n)}) \rightarrow \frac{1}{2\pi e}$. The vanishing error probability and sphere-bound-achieving properties of the lattices will be used to establish the convergence.



    \underline{We first address the perception constraint.}  By triangle inequality, 
    \begin{align}
         \frac{1}{\sqrt{n}}W_2(P_{X^n}, P_{\hat{X}^n}) &\leq \frac{1}{\sqrt{n}}W_2(P_{X^n}, P_{\beta(Q(\alpha \tilde{\bx})+s\bu)}) + \frac{1}{\sqrt{n}}W_2(P_{\hat{X}^n}, P_{\beta(Q(\alpha \tilde{\bx})+s\bu)}).
         \label{eq:perception_1}
    \end{align}
    We first bound the second term on the right. 
    By \citet[Claim~3.9]{regev2009lattices} and \citet[Lemma~9]{ling2014achieving}, $|dP_{\tilde{\bx}}(\bx)-dP_{X^n}(\bx)| \leq 4\epsilon_z dP_{X^n}(\bx), \forall \bx$. By the change of variable formula, this implies $|dP_{\tilde{\alpha\bx}}(\bx)-dP_{\alpha X^n}(\bx)| \leq 4\epsilon_z dP_{\alpha X^n}(\bx)$. Additionally, we have that for any $\blambda \in \Lambda$,
    \begin{align}
        \Pr\paran*{Q(\alpha \tilde{\bx}) = \blambda} = \int_{\mathcal{V}_0 + \blambda} dP_{\alpha \tilde{\bx}} &\leq \int_{\mathcal{V}_0 + \blambda} (1+4\epsilon_z)dP_{\alpha X^n} = (1+4\epsilon_z)\Pr\paran*{Q(\alpha X^n) = \blambda},
    \end{align}
    and $\Pr\paran*{Q(\alpha \tilde{\bx})=\blambda} \geq (1-4\epsilon_z)\Pr\paran*{Q(\alpha X^n) = \blambda}$ follows similarly. 
    Thus 
    \begin{equation}
        |\Pr\paran*{Q(\alpha \tilde{\bx}) = \blambda} - \Pr\paran*{Q(\alpha X^n) = \blambda}| \leq 4\epsilon_z \Pr\paran*{Q(\alpha X^n) = \blambda}. \label{eq:Q_density_bound}
    \end{equation}
    Finally, we have
    \begin{align}
        dP_{Q(\alpha \tilde{\bx})+s\bu}(\bw) &= \sum_{\blambda \in \Lambda} \Pr\paran*{Q(\alpha \tilde{\bx}) = \blambda} dP_{s\bu}(\bw - \blambda) \\
        &\leq \sum_{\blambda \in \Lambda} (1+4\epsilon_z) \Pr\paran*{Q(\alpha X^n) = \blambda}  dP_{s\bu}(\bw - \blambda) \\
        &= (1+4\epsilon_z) dP_{Q(\alpha X^n)+s\bu}(\bw),
    \end{align}
    where the other direction follows similarly. Thus 
    \begin{equation}
        |dP_{Q(\alpha \tilde{\bx})+s\bu}(\bw) - dP_{Q(\alpha X^n)+s\bu}(\bw)| \leq 4\epsilon_z dP_{Q(\alpha X^n)+s\bu}(\bw).
    \end{equation}
    for any $\bw$. This implies 
    \begin{align}
        \frac{1}{\beta\sqrt{n}}W_2 \paran*{P_{\hat{X}^n}, P_{\beta(Q(\alpha\tilde{\bx})+s\bu)}} &= \frac{1}{\sqrt{n}}W_2 \paran*{P_{Q(\alpha X^n)+s\bu}, P_{Q(\alpha \tilde{\bx}+s\bu)}} \\
        &\leq \frac{1}{\sqrt{n}}\sqrt{2 \int \|\bw\|^2 |dP_{Q(\alpha \tilde{\bx})+s\bu}(\bw) - dP_{Q(\alpha X^n)+s\bu}(\bw)|d\bw} \label{eq:talagrand1} \\
        &\leq \sqrt{8\epsilon_z \frac{1}{n}\EE[\|Q(\alpha X^n)+s\bu\|^2]}, \label{eq:W_Q_1}
    \end{align}
    where \eqref{eq:talagrand1} is by \citet[Thm.~6.15]{villani2016optimal}. To bound the second norm, we first get
    \begin{align}
        \frac{1}{n}\EE[\|Q(\alpha X^n)+s\bu\|^2] &\leq 2 \frac{1}{n}\EE[\|Q(\alpha X^n)\|^2] + 2 \frac{1}{n}\EE[\|s\bu\|^2] \\
        &= 2 \frac{1}{n}\EE[\|Q(\alpha X^n)\|^2] + 2s^2 \frac{(\sigma^2-\nu)\nu}{\sigma^2}. \label{eq:W_Q_2}
    \end{align}
    The second moment of $Q(\alpha X^n)$ satisfies 
    \begin{align}
        \EE[\|Q(\alpha X^n)\|^2] &= \sum_{\blambda} \|\blambda\|^2 \Pr(Q(\alpha X^n)=\blambda) \\
        &= \sum_{\blambda} \|\blambda\|^2 (\Pr\paran*{Q(\alpha \tilde{\bx}) = \blambda} - \Pr\paran*{Q(\alpha X^n) = \blambda}) + \sum_{\blambda} \|\blambda\|^2 \Pr Q(\alpha \tilde{\bx})=\blambda) \\
        &\leq 4\epsilon_z \EE[\|Q(\alpha X^n)\|^2] + \EE[\|Q(\alpha \tilde{\bx})\|^2],
    \end{align}
    where the last step is by \eqref{eq:Q_density_bound}. Thus for $n$ sufficiently large,
    \begin{align}
        \frac{1}{n}\EE[\|Q(\alpha X^n)\|^2] &\leq \frac{1}{1-4\epsilon_z} \frac{1}{n}\EE[\|Q(\alpha \tilde{\bx})\|^2] \\
        &\leq \frac{1}{1-4\epsilon_z} (\sigma^2-\nu),
    \end{align}
    where the last step is by Lemma~\ref{lemma:Q_second_moment_bound} and \citet{Banaszczyk1993}. Combining with \eqref{eq:W_Q_1} and \eqref{eq:W_Q_2},
    \begin{align}
        \frac{1}{\sqrt{n}}W_2 \paran*{P_{\hat{X}^n}, P_{\beta(Q(\alpha\tilde{\bx})+s\bu)}} &\leq \beta \sqrt{8\epsilon_z \paran*{2\frac{\sigma^2-\nu}{1-4\epsilon_z} + 2 s^2 \frac{(\sigma^2-\nu)\nu}{\sigma^2}}}.
    \end{align}

    For the first term on the right of \eqref{eq:perception_1}, we again apply triangle inequality:
    \begin{equation}
        \frac{1}{\sqrt{n}}W_2(P_{X^n}, P_{\beta(Q(\alpha \tilde{\bx})+s\bu)}) \leq \underbrace{\frac{1}{\sqrt{n}}W_2(P_{X^n}, P_{\beta(\by+s\bu)})}_{A} + \underbrace{\frac{1}{\sqrt{n}}W_2(P_{\beta(Q(\alpha \tilde{\bx})+s\bu)}, P_{\beta(\by+s\bu)})}_{B}.
    \end{equation}
    For term $A$, let $\tilde{\bz} \sim \mathcal{N}\paran*{0, \frac{(\sigma^2-\nu)\nu}{\sigma^2}I_n} $ and $\br \sim \mathcal{N}\paran*{0, \frac{\sigma^2}{\beta^2}I_n}$. Then
    \begin{align}
        \frac{1}{\sqrt{n}}W_2\paran*{P_{\br}, P_{\by+s\tilde{\bz}}} &\leq \frac{1}{\sqrt{n}} \sqrt{2 \int \|\bw\|^2 |dP_{\br}-dP_{\by+s\tilde{\bz}}|d\bw} \label{eq:weighted_tv} \\
        &\leq \frac{1}{\sqrt{n}} \sqrt{8 \epsilon_u \EE_{\br}[\|\br\|^2]} \label{eq:r_close} \\
        &= \sqrt{8 \epsilon_u \frac{\sigma^2}{\beta^2}},
    \end{align}
    where \eqref{eq:weighted_tv} is by \citet[Thm.~6.15]{villani2016optimal}, and \eqref{eq:r_close} holds since $s, \beta$ satisfy \eqref{eq:sbeta_condition_P} and by applying \citet[Lemma~9]{ling2014achieving}. Therefore
    \begin{align}
        \frac{1}{\beta\sqrt{n}}W_2(P_{X^n}, P_{\beta(\by+s\bu)}) &= \frac{1}{\sqrt{n}}W_2\paran*{P_{\br}, P_{\by+s\bu}} \\
        & \leq \frac{1}{\sqrt{n}}W_2\paran*{P_{\br}, P_{\by+s\tilde{\bz}}} + \frac{1}{\sqrt{n}}W_2\paran*{P_{\by+s\tilde{\bz}}, P_{\by+s\bu}} \\
        &\leq \sqrt{8 \epsilon_u \frac{\sigma^2}{\beta^2}} + s\frac{1}{\sqrt{n}}W_2(P_{\bu}, P_{\tilde{\bz}}) \label{eq:PD_A2} \\
        &\leq \sqrt{8 \epsilon_u \frac{\sigma^2}{\beta^2}} + s\frac{1}{\sqrt{n}}\sqrt{2\frac{(\sigma^2-\nu)\nu}{\sigma^2} \DKL(\bu||\tilde{\bz})} \label{eq:PD_A2_b} \\
        &\leq \sqrt{8 \epsilon_u \frac{\sigma^2}{\beta^2}} + \frac{\epsilon}{2\beta}, \label{eq:PD_A3}
    \end{align}

    where \eqref{eq:PD_A2} holds by \citet[Lemma~9]{ling2014achieving} for $n$ sufficiently large, and data processing inequality for $W_2$ \citep[Lemma~5.2]{santambrogio2015ot}, \eqref{eq:PD_A2_b} holds by \citet{talagrand1996}, and \eqref{eq:PD_A3} holds because
    \begin{align}
        \frac{1}{n}\DKL(\bu||\tilde{\bz}) &= \frac{1}{n}\EE_{\bu}[-\log p_{\tilde{\bz}}(\bu)] -\frac{1}{n}H(\bu)\\
        &= \frac{1}{n}\bracket*{\frac{1}{2\ln 2\frac{(\sigma^2-\nu)\nu}{\sigma^2}} \EE_{\bu}[\|\bu\|^2] + \frac{n}{2}\log \paran*{2\pi \frac{(\sigma^2-\nu)\nu}{\sigma^2}}} - \frac{1}{n} \log V(\Lambda^{(n)}) \\
        &= G(\Lambda^{(n)})\cdot 2\pi e \frac{1+\epsilon_2}{2\ln 2} +\frac{1}{2}\log \frac{1}{e(1+\epsilon_2)} \\
        &= \frac{1}{2\ln 2}\paran*{G(\Lambda^{(n)}) \cdot 2\pi e (1+\epsilon_2)-1} - \frac{1}{2} \log(1+\epsilon_2) \\
        &\stackrel{n \rightarrow \infty}{\longrightarrow} 0,
    \end{align}
    where we use the fact that $\frac{1}{n}\EE[\|\bu\|^2] = G(\Lambda^{(n)}) \cdot 2\pi e \cdot \frac{(\sigma^2-\nu)\nu}{\sigma^2}(1+\epsilon_2)$ and the lattice sequence is sphere-bound-achieving. Thus
    \begin{align}
        A &= \frac{1}{\sqrt{n}}W_2(P_{X^n}, P_{\beta(\by+\bu)}) \leq \sqrt{8 \sigma^2 \epsilon_u} + \frac{\epsilon}{2}.
    \end{align}

    For term $B$, let us first divide by $\beta$ and analyze the squared 2-Wasserstein; this gives us $\frac{1}{n} W_2^2(P_{Q(\alpha \tilde{\bx})+s\bu}, P_{\by+s\bu}) \leq \frac{1}{n} W_2^2(P_{Q(\alpha \tilde{\bx})}, P_{\by})$ by \citet[Lemma~5.2]{santambrogio2015ot}. Let $\pi$ be the coupling between $P_{Q(\alpha \tilde{\bx})}, P_{\by}$ induced by the joint $P_{\tilde{\bx}, \by}$; i.e., $\hat{\by}, \by \sim \pi$ means that $\hat{\by} = Q(\alpha(\by + \bz))$ with $\bz \sim \mathcal{N}(0, \nu I_n)$ as defined above. Then,
    \begin{align}
        \frac{1}{n} W_2^2(P_{Q(\alpha \tilde{\bx})}, P_{\by}) &= \frac{1}{n} \min_{\pi' \in \Pi(P_{Q(\alpha \tilde{\bx})}, P_{\by})} \EE_{\hat{\by},\by \sim \pi'}\bracket*{\|\hat{\by}-\by\|^2} \\
        &\leq \frac{1}{n}  \EE_{\hat{\by},\by \sim \pi}\bracket*{\|\hat{\by}-\by\|^2} = \frac{1}{n}  \EE_{\by, \bz}\bracket*{\|\hat{\by}-\by\|^2} \\
        &= \frac{1}{n} \EE_{\by, \bz}\bracket*{\|\hat{\by}-\by\|^2 \Big | \mathcal{E}^\complement}\PP(\mathcal{E}^\complement) + \frac{1}{n}\EE_{\by, \bz}\bracket*{\|\hat{\by}-\by\|^2 \mathbbm{1}\{\mathcal{E}\}} \\
        &= \frac{1}{n}\EE_{\by, \bz}\bracket*{\|\hat{\by}-\by\|^2 \mathbbm{1}\{\mathcal{E}\}} \label{eq:P_B_2} \\
        &\leq \frac{1}{n}\sqrt{\EE\bracket*{\|\hat{\by}-\by\|^4} \PP(\mathcal{E})}
    \end{align}
    where $\mathcal{E} := \{Q(\alpha\tilde{\bx}) \neq \by\} = \{(\alpha-1)\by + \alpha \bz \notin \mathcal{V}_0(\Lambda)\}$ is the ``error'' event that quantizing $\alpha\tilde{\bx}$ does not equal the lattice Gaussian $\by$, and the last step is by Cauchy-Schwarz. For \eqref{eq:P_B_2}, we have that $\EE_{\by, \bz}\bracket*{\|\hat{\by}-\by\|^2 \Big | \mathcal{E}^\complement} = \EE_{\by, \bz}\bracket*{\|\hat{\by}-\by\|^2 | \hat{\by} = \by} = 0$. 
    Note that 
    \begin{align}
    \|\hat{\by}-\by\| = \|\hat{\by}-\alpha \tilde{\bx} + \alpha \tilde{\bx}- \by\| &\leq \|\hat{\by}-\alpha\tilde{\bx}\| + \|\by - \alpha \tilde{\bx}\|  = \min_{\by' \in \Lambda^{(n)}} \|\by' - \alpha \tilde{\bx}\| + \|\by - \alpha \tilde{\bx}\|\\
    &\leq \|\by - \alpha \tilde{\bx}\| + \|\by - \alpha \tilde{\bx}\| = 2 \cdot \|\by - \alpha \tilde{\bx}\|. 
    \end{align}
    This implies that 
    \begin{align}
        & \quad \EE\bracket*{\|\hat{\by}-\by\|^4} \\
        &\leq 16 \EE \bracket*{\|\by-\alpha \tilde{\bx}\|^4} \\
        &= 16 \EE\bracket*{\|(1-\alpha)\by-\alpha \bz\|^4} \\
        &= 16(1-\alpha)^4 \EE[\|\by\|^4] + 16\alpha^4 \EE[\|\bz\|^4] + 64P\alpha^2 (1-\alpha)^2 \EE[\langle \by, \bz \rangle^2] + 32(1-\alpha)^2 \alpha^2 \EE[\|\by\|^2 \|\bz\|^2] \label{eq:y_alpha_tildex}
    \end{align}
    where we use the fact that $y, z$ are independent. 
    Note that $\frac{1}{n}\EE[\|\by\|^4] \leq 3(\sigma^2-\nu)^2$ for $n$ sufficiently large \citep{zhao2024fourth, micciancio2004}, $\EE[\|\bz\|^4] = \nu^2 n(n+2)$, and $\EE[\langle \by, \bz\rangle^2] \leq \EE[\|\by\|^2\|\bz\|^2] \leq \EE[\|\by\|^2] \EE[\|\bz\|^2]$ by Cauchy-Schwarz and independence. Additionally, $\EE[\|\by\|^2] \leq n (\sigma^2-\nu)$ by \citet{Banaszczyk1993}.
    Therefore 
    \begin{align}
        \EE\bracket*{\|\hat{\by}-\by\|^4} &\leq 48(1-\alpha)^4 \cdot (\sigma^2-\nu)^2 n +16\alpha^4 \cdot  \nu^2 n(n+2) + 96\alpha^2(1-\alpha)^2 \EE[\|\by\|^2] \EE[ \|\bz\|^2] \\
        &\leq 48(1-\alpha)^4 \cdot (\sigma^2-\nu)^2 n + 16\alpha^4 \cdot  \nu^2 n(n+2) + 96\alpha^2(1-\alpha)^2 n^2 (\sigma^2-\nu)\nu \\
        &= O(n^2).
    \end{align}

    By the choice of $\alpha$, $Q(\alpha \tilde{\bx})$ computes the MAP estimate of $\by$ \citep[Prop.~3]{ling2014achieving}. Therefore, 
    \begin{align}
        \frac{1}{n}\sqrt{\EE_{\by, \bz}\bracket*{\|\hat{\by}-\by\|^4 }\PP(\mathcal{E})} \leq \frac{\epsilon^2}{4\beta^2},
        \label{eq:P_B_3}
    \end{align}
    for $n$ sufficiently large, since the error of the MAP estimate satisfies $\lim_{n\rightarrow \infty} \PP(\mathcal{E}) = 0$. Hence 
    \begin{align}
        \frac{1}{\sqrt{n}}W_2(P_{\beta(Q(\alpha \tilde{\bx})+s\bu)}, P_{\beta(\by+s\bu)}) &\leq  \frac{\beta}{\sqrt{n}} W_2(P_{Q(\alpha \tilde{\bx})}, P_{\by}) \\
        & \stackrel{\eqref{eq:P_B_2}}{\leq} \beta \sqrt{\frac{1}{n}\EE_{\by, \bz}\bracket*{\|\hat{\by}-\by\|^2 \mathbbm{1}\{\mathcal{E}\}}} \\
        &\stackrel{\eqref{eq:P_B_3}}{\leq} \beta \frac{\epsilon}{2\beta} = \frac{\epsilon}{2}.
    \end{align}
    In conclusion, we have shown
    \begin{align}
         \frac{1}{\sqrt{n}}W_2(P_{X^n}, P_{\hat{X}^n}) \leq \sqrt{8 \sigma^2 \epsilon_u} + \beta \sqrt{8\epsilon_z \paran*{2\frac{\sigma^2-\nu}{1-4\epsilon_z} + 2 s^2 \frac{(\sigma^2-\nu)\nu}{\sigma^2}}} + \epsilon.
    \end{align}


    \underline{Next, we address the distortion term.}  We have that 
    \begin{align}
        \frac{1}{n} \EE_{X^n, \bu}\bracket*{\|X^n - \beta(Q(\alpha X^n) - s\bu)\|^2} &= \EE_{\bu}\bracket*{\underbrace{\frac{1}{n}\int \|\bx - \beta(Q(\alpha \bx) - s\bu)\|^2 (dP_{X^n}(\bx) - dP_{\tilde{\bx}}(\bx)) d\bx}_{S_1}} \nonumber \\
        & \qquad + \underbrace{\frac{1}{n} \EE_{\tilde{\bx}, \bu} \bracket*{\|\tilde{\bx} - \beta(Q(\alpha \tilde{\bx}) - s\bu)\|^2}}_{S_2}.  
    \end{align}
    By \citet[Claim~3.9]{regev2009lattices} and \citet[Lemma~9]{ling2014achieving}, $|dP_{\tilde{\bx}}(\bx)-dP_{X^n}(\bx)| \leq 4\epsilon_z dP_{X^n}(\bx), \forall \bx$. The first term $S_1$ can be written as
    \begin{align}
        S_1 \leq 4\epsilon_z \frac{1}{n}\int \|\bx - \beta(Q(\alpha \bx) - s\bu')\|^2 dP_{X^n} = 4\epsilon_z \frac{1}{n}\EE_{X^n}\bracket*{\|X^n - \beta(Q(\alpha X^n) - s\bu')\|^2},
    \end{align}
    for any $\bu'$ and therefore 
    \begin{equation}
        \frac{1}{n} \EE_{X^n, \bu}\bracket*{\|X^n - \beta(Q(\alpha X^n) - s\bu)\|^2} \leq \frac{1}{1-4\epsilon_z} S_2. \label{eq:S2_bound}
    \end{equation}
    
    We focus on the $S_2$ term. As before, let $\mathcal{E} := \{Q(\alpha\tilde{\bx}) \neq \by\}$ be the ``error'' event. Then
    \begin{align}
        S_2 &= \frac{1}{n} \EE \bracket*{\|\tilde{\bx} - \beta(Q(\alpha \tilde{\bx}) - s\bu)\|^2 \mathbbm{1}\{ \mathcal{E}^\complement\}}  + \frac{1}{n} \EE \bracket*{\|\tilde{\bx} - \beta(Q(\alpha \tilde{\bx}) - s\bu)\|^2 \mathbbm{1}\{ \mathcal{E}\}} \\
        &\leq \frac{1}{n} \EE \bracket*{\|\tilde{\bx} - \beta(Q(\alpha \tilde{\bx}) - s\bu)\|^2 \mathbbm{1}\{ \mathcal{E}^\complement\}}  + \frac{1}{n} \sqrt{\EE \bracket*{\|\tilde{\bx} - \beta(Q(\alpha \tilde{\bx}) - s\bu)\|^4 }\PP(\mathcal{E}) },
    \end{align}
    where we use Cauchy-Schwarz. 
    Note that the term with the 4-th moment satisfies
    \begin{align}
        &\EE\bracket*{\|\tilde{\bx}-\beta(Q(\alpha \tilde{\bx}) - s\bu)\|^4} \\
        &\leq 8(1-\beta\alpha)\EE\bracket*{\|\tilde{\bx}\|^4} + 8\beta \EE\bracket*{\|Q(\alpha \tilde{\bx})-\alpha \tilde{\bx}\|^4} + 8\beta \EE\bracket*{\|s\bu\|^4}.
    \end{align}
    The second term is $O(n^2)$ by following \eqref{eq:y_alpha_tildex}, and the third term satisfies
    \begin{equation}
        8\beta \EE\bracket*{\|s\bu\|^4} \leq 8\beta s^4 r^4_{\mathrm{cov}}(\Lambda^{(n)}) = O(n^2)
    \end{equation}
    by Lemma~\ref{lemma:covering_radius}. For the first term,
    \begin{align}
        \EE\bracket{\|\tilde{\bx}\|^4} &= \EE\bracket*{\|\by+\bz\|^4} \\
        &\leq \EE[\|\by\|^4] + \EE[\|\bz\|^4] + 6\EE[\|\by\|^2] \EE[\|\bz\|^2] \\
        &\leq 3(\sigma^2-\nu)^2 n + \nu^2 n(n+2) + 6(\sigma^2-\nu)\nu n^2 \\
        &= O(n^2)
    \end{align}
    for $n$ sufficiently large, by the independence of $\by$ and $\bz$, and using 4th moment results on lattice Gaussians from \citet{zhao2024fourth, micciancio2004}.
      Since $\lim_{n \rightarrow \infty} \PP(\mathcal{E}) = 0$, we have that $\frac{1}{n} \sqrt{\EE \bracket*{\|\tilde{\bx} - \beta(Q(\alpha \tilde{\bx}) - s\bu)\|^4 }\PP(\mathcal{E}) } < \frac{\epsilon}{2}$
      for $n$ sufficiently large. Thus
    \begin{align}
        S_2 &\leq  \frac{1}{n}\EE \bracket*{\|\tilde{\bx} - \beta(Q(\alpha \tilde{\bx}) - s\bu)\|^2 \mathbbm{1}\{ \mathcal{E}^\complement\}} + \frac{\epsilon}{2} \\ 
        &=  \EE \bracket*{\frac{1}{n}\|\tilde{\bx} - \beta(\by - s\bu)\|^2 \mathbbm{1}\{ \mathcal{E}^\complement\}} + \frac{\epsilon}{2}\\
        &\leq  \EE \bracket*{\frac{1}{n}\|\tilde{\bx} - \beta(\by - s\bu)\|^2} \norm{\mathbbm{1}\{ \mathcal{E}^\complement\}}_{\infty} + \frac{\epsilon}{2} \label{eq:dc1} \\
        &= \frac{1}{n}\EE \bracket*{\|\tilde{\bx} - \beta(\by - s\bu)\|^2} + \frac{\epsilon}{2}  \\
        &=  \frac{1}{n}\EE \bracket*{\|(1-\beta)\by + \bz - \beta s\bu\|^2} + \frac{\epsilon}{2} \\
        & = (1-\beta)^2\frac{1}{n}\EE[\|\by\|^2] + \frac{1}{n} \EE [\|\bz\|^2] + s^2\beta^2 \frac{1}{n} \EE[\|\bu\|^2] + \frac{\epsilon}{2} \\
        &\leq (1-\beta)^2 (\sigma^2-\nu) + \nu + s^2\beta^2 \cdot G(\Lambda^{(n)}) \cdot 2\pi e \cdot \frac{(\sigma^2-\nu)\nu}{\sigma^2}(1+\epsilon_2) + \frac{\epsilon}{2} \label{eq:dc2} \\
        &= (1-\beta)^2 (\sigma^2-\nu) + \nu + s^2\beta^2 \frac{(\sigma^2-\nu)\nu}{\sigma^2} + \epsilon, \label{eq:dc3} 
    \end{align}
    for $n$ sufficiently large, where \eqref{eq:dc1} is by Hölder's inequality, and \eqref{eq:dc2} holds by \citet{Banaszczyk1993}, and since the lattice second moment satisfies $\frac{1}{n}\EE[\|\bu\|^2] = G(\Lambda^{(n)}) \cdot V(\Lambda^{(n)})^{2/n}$, and the lattice sequence is quantization-good (i.e., sphere-bound-achieving). The result follows by combining \eqref{eq:S2_bound} and \eqref{eq:dc3}.
    

    \underline{Finally, we address the rate term.} We have 
    \begin{align}
        \frac{1}{n}H(Q(\alpha X^n)) &= \frac{1}{n} \EE_{X^n} \bracket*{-\log dP_{Q(\alpha X^n)} } \\
        &= \underbrace{\frac{1}{n} \int -\log \paran*{dP_{Q(\alpha \bx)}  } (dP_{X^n}(\bx) - dP_{\tilde{\bx}}(\bx))d\bx}_{R_1} + \underbrace{\frac{1}{n} \EE_{\tilde{\bx}} \bracket*{-\log dP_{Q(\alpha \tilde{\bx})}}}_{R_2}.
    \end{align}
    The first term $R_1$ will vanish as $n\rightarrow \infty$, due to the following. By \citet[Claim~3.9]{regev2009lattices} and \citet[Lemma~9]{ling2014achieving}, we have that $|dP_{\tilde{\bx}}(\bx)-dP_{X^n}(\bx)| \leq 4\epsilon_z dP_{X^n}(\bx), \forall \bx$, by the choice of the lattice sequence $\Lambda^{(n)}$, for $n$ sufficiently large. Therefore, 
    \begin{align}
        R_1 &\leq \frac{1}{n} \int -\log dP_{Q(\alpha \bx)} |dP_{X^n}(\bx)-dP_{\tilde{\bx}}(\bx)|d\bx \\
        &\leq 4\epsilon_z \frac{1}{n} \int -\log (dP_{Q(\alpha\bx)}) dP_{X^n}(\bx)d\bx \label{eq:R_1_1} \\
        &= 4\epsilon_z \frac{1}{n}H(Q(\alpha X^n)),
    \end{align}
    which implies that 
    \begin{equation}
        \frac{1}{n}H(Q(\alpha X^n)) \leq \frac{1}{1-4\epsilon_z} R_2,
        \label{eq:HQR2}
    \end{equation}
    for $n$ sufficiently large.

    For $R_2$, this is the per-dimension entropy of $Q(\alpha \tilde{\bx})$. Let $p(\blambda) := \Pr\paran*{Q(\alpha \tilde{\bx})=\blambda}$ be the PMF of $Q(\alpha \tilde{\bx})$ supported on $\blambda \in \Lambda$, and let $q_{\by}(\blambda)$ be the lattice Gaussian PMF of $\by \sim \LG_{\Lambda}(\bzero, \sigma^2-\nu)$. Then
    \begin{align}
        H(Q(\alpha \tilde{\bx})) &= -\sum_{\blambda \in \Lambda} p(\blambda) \log p(\blambda) \\
        &\leq -\sum_{\blambda \in \Lambda}p(\blambda) \log q_{\by}(\blambda) \\
        &= \sum_{\blambda \in \Lambda} p(\blambda)\bracket*{\frac{1}{2(\sigma^2-\nu)}\|\blambda\|^2 + \log \rho_{\sqrt{\sigma^2-\nu}}(\Lambda)} \\
        &= \frac{1}{2(\sigma^2-\nu)} \EE\bracket*{\|Q(\alpha \tilde{\bx})\|^2} + \log \rho_{\sqrt{\sigma^2-\nu}}(\Lambda).
    \end{align}

    Combining, we have that 
    \begin{align}
        R_2 &= \frac{1}{n} H(Q(\alpha \tilde{\bx})) \\
        &\leq \frac{1}{2(\sigma^2-\nu)} \frac{1}{n}\EE\bracket*{\|Q(\alpha \tilde{\bx})\|^2} + \frac{1}{n}\log \rho_{\sqrt{\sigma^2-\nu}}(\Lambda) \\
        &\leq \frac{1}{2(\sigma^2-\nu)} \bracket*{\frac{1}{n}\EE{\|\by\|^2} + 2\epsilon(\sigma^2-\nu)} + \frac{1}{n}\log \rho_{\sqrt{\sigma^2-\nu}}(\Lambda) \label{eq:2nd_moment_bound_applied} \\
        &= \frac{1}{2(\sigma^2-\nu)} \frac{1}{n}\EE\bracket*{\|\by\|^2} + \frac{1}{n}\log \rho_{\sqrt{\sigma^2-\nu}}(\Lambda) + \epsilon \\
        &= -\frac{1}{n}\sum_{\blambda \in \Lambda} q_{\by}(\blambda) \log q_{\by}(\blambda) + \epsilon \\
        &= \frac{1}{n}H(\by) + \epsilon, \label{eq:R2_bound}
    \end{align}
    for $n$ sufficiently large, where \eqref{eq:2nd_moment_bound_applied} is by Lemma~\ref{lemma:Q_second_moment_bound}. Therefore, for $n$ sufficiently large, we have
    \begin{align}
        \frac{1}{n}H(Q(\alpha X^n)) &= \frac{1}{1-4\epsilon_z}\bracket*{\frac{1}{n}H(\by) + \epsilon} \label{eq:R_2_1}\\
        &\leq \frac{1}{1-4\epsilon_z}\bracket*{ \frac{1}{2} \log \frac{\sigma^2}{\nu (1+\epsilon_2)}+ \epsilon_h + \epsilon}  \label{eq:R_2_3}\\
        &\leq \frac{1}{1-4\epsilon_z}\bracket*{ \frac{1}{2} \log \frac{\sigma^2}{\nu} + \epsilon_h + \epsilon} \label{eq:R_2_4} 
    \end{align}
    where \eqref{eq:R_2_1} is due to \eqref{eq:HQR2} and \eqref{eq:R2_bound}. \eqref{eq:R_2_3} holds by \citet[Lemma~6]{ling2014achieving}, the choice of volume, for $n$ sufficiently large.

    \end{proof}

We now prove Thm.~\ref{thm:PD-LTC}, which is restated below.

\begin{theorem}[Optimality of PD-LTC for Gaussian sources (Thm.~\ref{thm:PD-LTC} in main text)]
    Let $X_1,X_2,\dots \stackrel{\mathrm{i.i.d.}}{\sim} \mathcal{N}(0, \sigma^2)$. For any $D$ satisfying $0 < D \leq 2\sigma^2$, there exists a sequence of PD-LTCs $\{(g_a^{(n)}, g_s^{(n)}, \Lambda^{(n)})\}_{n=1}^{\infty}$ such that 
    \begin{equation}
            \lim_{n \rightarrow \infty} \frac{1}{n} H(Q_{\Lambda^{(n)}}(g_a^{(n)}(X^n))) \leq R\paran*{\nicefrac{D}{2}, \infty},
    \end{equation}
    \begin{equation}
        \lim_{n \rightarrow \infty} \frac{1}{n} \mathbb{E}[\|X^n - \hat{X}^n\|^2] \leq D,
    \end{equation}
    \begin{equation}
        \lim_{n \rightarrow \infty} \frac{1}{n}W_2^2(P_{X^n}, P_{\hat{X}^n}) =0,
    \end{equation}
    where $\hat{X}^n = g_s^{(n)}(Q_{\Lambda^{(n)}}(g_a^{(n)}(X^n)) + s\bu)$, $\bu \sim \mathrm{Unif}(\mathcal{V}(\Lambda^{(n)}))$, and $s = \frac{\sigma}{\sqrt{\sigma^2 - \nicefrac{D}{2}}}$. 
\end{theorem}

\begin{proof}
    Set $s = \frac{\sigma}{\sqrt{\sigma^2-\nu}}$, $\beta=1$, and $\nu = \frac{D}{2}$, which satisfies \eqref{eq:sbeta_condition_P}. Choose the lattices $\Lambda^{(n)}$ as polar lattices \citep{liu2021polar, liu2024quantization}. By the choice of $s$, $\epsilon_u = \epsilon_z = \epsilon_{\Lambda^{(n)}}\paran*{\sqrt{\frac{(\sigma^2-\nu)\nu}{\sigma^2}}}$, and therefore both flatness factors will vanish exponentially fast as $n \rightarrow \infty$ by \citet[Prop.~1]{liu2021polar}. Additionally, by using the fact that $\epsilon_1 \leq \epsilon_{\Lambda^{(n)}}\paran*{\nicefrac{\sqrt{\frac{(\sigma^2-\nu)\nu}{\sigma^2}}}{\sqrt{\frac{\pi}{\pi-t}}}}$, and since $\epsilon_z$ vanishes, taking $t \rightarrow 0$ results in a vanishing $\epsilon_1$ and therefore vanishing $\epsilon_h$; see \citet[Sec.~III]{ling2014achieving}. By \citet{liu2019construction}, polar lattices are AWGN-good and have exponentially decaying MAP error probability for the $\tilde{\bx} = \by + \bz$ AWGN channel with lattice Gaussian input, at any signal-to-noise ratio (and therefore any $\nu \in (0, \sigma^2)$). By \citet{liu2024quantization}, the polar lattices are also quantization-good. Therefore, by Lemma~\ref{lemma:PD-LTC-region} we have that the rate, distortion, and perception satisfy
    \begin{equation}
            \lim_{n \rightarrow \infty} \frac{1}{n} H(Q_{\Lambda^{(n)}}(g_a^{(n)}(X^n))) \leq \frac{1}{2}\log \frac{2\sigma^2}{D},
    \end{equation}
    \begin{equation}
        \lim_{n \rightarrow \infty} \frac{1}{n} \mathbb{E}[\|X^n - \hat{X}^n\|^2] \leq D,
    \end{equation}
    \begin{equation}
        \lim_{n \rightarrow \infty} \frac{1}{n}W_2^2(P_{X^n}, P_{\hat{X}^n}) =0,
    \end{equation}
    when the transforms are chosen to be $g_a^{(n)}(\bv) = \alpha \bv$ and $g_s^{(n)}(\bv) = \beta \bv$. 
\end{proof}

\begin{remark}
    \label{remark:flatness_s}
    In the proof of Lemma~\ref{lemma:PD-LTC-region} (and therefore Thm.~\ref{thm:PD-LTC}), the condition in \eqref{eq:sbeta_condition_P} ensures that the distribution of $\hat{X}^n$ is approximately Gaussian with covariance $\sigma^2 I_n$, and the closeness of this approximation is quantified by $\epsilon_u$. This essentially controls how well the perception constraint can be enforced, which is why $s$ shows up in the flatness factor. One can see that $s$ needs to be sufficiently large to ensure $\epsilon_u = \epsilon_z$ which guarantees $\epsilon_u$ vanishes by \citet[Prop.~1]{liu2021polar}. If $s$ is too small, $\epsilon_u > \epsilon_z$ and it is not guaranteed it will vanish. We see that the choice of $s = \frac{\sigma}{\sqrt{\sigma^2-\nicefrac{D}{}2}} > 1$ indicates a dither $s\bu$ that ``bleeds'' outside the lattice cell at low rates is sufficient to ensure a vanishing $\epsilon_u$ and therefore vanishing perception. This seems to imply it is not possible to choose a smaller $s$ than the one chosen and still make $\epsilon_u$ vanish; if this were possible, it would imply that a near-perfect perception rate-distortion tradeoff below $R(\nicefrac{D}{2}, \infty)$ is achievable, which would contradict results in the RDP literature that $R(\nicefrac{D}{2}, \infty)$ is the lower bound \citep{yan2021perceptual, wagner2022rate, chen2022rate}. 
    
    On the other hand, $\epsilon_z$ is used to approximate $P_{X^n} \approx P_{\tilde{\bx}}$, and $\epsilon_h$ is used to approximate the entropy rate of $\by$; as shown in the proof, a vanishing $\epsilon_z$ implies a vanishing $\epsilon_h$. 
\end{remark}

\begin{remark}
    \label{remark:perfect_perception}
    The sequence of PD-LTCs in Thm.~\ref{thm:PD-LTC} that satisfy the near-perfect perception constraint in \eqref{eq:PD-LTC-perception} can be upgraded to a sequence of codes satisfying a perfect perception constraint of $P_{X^n}= P_{\hat{X}^n}$ with the same asymptotic rate and distortion, by following the ``coupling argument'' of \citet{saldi2015output}. That is, if $\pi_{X^n, \hat{X}^n}$ is the coupling that satisfies $\frac{1}{n}W_2^2(P_{X^n}, P_{\hat{X}^n}) = \epsilon$, then one can use $\bar{X}^n \sim \pi_{X^n|\hat{X}^n}$ as the reconstruction instead of $\hat{X}^n$. This would ensure $P_{\bar{X}^n}=P_{X^n}$ and the distortion $\frac{1}{n} \mathbb{E}[\|X^n - \bar{X}^n\|^2] \leq \frac{1}{n} \mathbb{E}[\|X^n - \hat{X}^n\|^2] + \epsilon$. Since this change only affects the decoder, the rate remains the same as before.
\end{remark}

\begin{remark}
    The choice of polar lattices in the proof of Thm.~\ref{thm:PD-LTC} may not be necessary, and other lattice families may also work. The essential requirements are that both $\epsilon_u$ and $\epsilon_z$ vanish, AWGN-goodness, quantization-goodness, and vanishing MAP error probability for the $\tilde{\bx}=\by+\bz$ AWGN channel. While the latter three can be simultaneously satisfied by other lattice families, such as the mod-$p$ lattices \citep{loeliger}, to the best of the authors' knowledge, the only currently known lattice family with known results on vanishing $\epsilon_u$ and $\epsilon_z$ is the polar lattice. This does not preclude the existence of other lattice families that may satisfy all the aforementioned requirements.
\end{remark}

\section{Empirical Evaluation of Sliced Wasserstein}

\label{sec:SWD}

Here, we assess how accurate we can estimate the squared 2-Wasserstein Distance $W_2^2(P, Q)$ with the sliced Wasserstein distance $\mathsf{SW}_2^2(P, Q)$ \citep{SWD}. Let $P=\mathcal{N}(\bone, I_n)$, and $Q=\mathcal{N}(\bzero, 2I_n)$. Then $\frac{1}{n}W_2^2(P, Q) = 2$. Shown in Table~\ref{table:20proj}, \ref{table:50proj}, sliced Wasserstein provides fairly accurate estimate of the true Wasserstein for Gaussian samples, where $N$ is the number of samples. This supports the use of sliced Wasserstein as a proxy for the Wasserstein distance in our experiment surrounding the Gaussian source (as we would expect the reconstruction distribution to be near-Gaussian). Therefore, the theoretical bounds are a meaningful comparison, as they align with the operational quantities of the corresponding coding theorem.

\begin{table}[h!]
    \centering
    \begin{tabular}{c|ccc}
        \toprule
        \multirow{6}{*}{$n=8$} & $N$ & Estimate & Std. Error \\
        \cmidrule{2-4}
         & 100 & 2.129 & 0.282 \\
         & 1000 & 1.999 & 0.187    \\
         & 5000 & 2.004 & 0.158 \\
         & 10000 & 1.994 & 0.164  \\  
         \midrule
         \multirow{5}{*}{$n=24$} 
         & 100 & 2.118 & 0.235 \\
         & 1000 & 2.017 & 0.192    \\
         & 5000 & 2.005 & 0.188 \\
         & 10000 & 2.008 & 0.184      \\  
        \bottomrule
    \end{tabular}
    \caption{Estimating $W_2^2$ using Sliced-Wasserstein with 50 projections.}
    \label{table:50proj}
\end{table}


\begin{table}[h!]
    \centering
    \begin{tabular}{c|ccc}
        \toprule
        \multirow{5}{*}{$n=8$} & $N$ & Estimate & Std. Error \\
        \cmidrule{2-4}
         & 100 & 2.142 & 0.329 \\
         & 1000 & 1.987 & 0.299    \\
         & 5000 & 1.999 & 0.274 \\
         & 10000 & 2.009 & 0.276    \\
        \bottomrule
    \end{tabular}
    \caption{Estimating $W_2^2$ using Sliced-Wasserstein with 20 projections.}
    \label{table:20proj}
\end{table}


\end{document}